\theoremstyle{plain}
\newtheorem{thm}{Theorem}
\newtheorem{cor}{Corollary}
\theoremstyle{definition}
\theoremstyle{remark}
\begin{document}
\title{Effective Capacity of Licensed-Assisted Access in Unlicensed Spectrum for 5G: From Theory to Application}

\author{Qimei Cui,~\IEEEmembership{Senior Member,~IEEE,}
        Yu Gu,~\IEEEmembership{Student Member,~IEEE,}
        Wei Ni,~\IEEEmembership{Senior Member,~IEEE,}\\
        and Ren Ping Liu,~\IEEEmembership{Senior Member,~IEEE}
\thanks{Manuscript received December 15, 2016; accepted March 6, 2017. }
\thanks{The work was supported by National Nature Science Foundation of China Project (Grant No. 61471058), International Cooperation NSFC Program (61461136002)£¬the National High Technology Program of China (2014AA01A701), Hong Kong, Macao and Taiwan Science and Technology Cooperation Projects (2016YFE0122900) and the 111 Project of China (B16006).}
\thanks{Q. Cui and Y. Gu are with the National Engineering Laboratory for Mobile Network Technologies, Beijing University of Posts and Telecommunications, Beijing 100876, China (e-mail: cuiqimei@bupt.edu.cn; guyu@bupt.edu.cn).}
\thanks{W. Ni is with the Digital Productivity and Services Flagship, Commonwealth Scientific and Industrial Research Organization (CSIRO), Sydney, N.S.W. 2122, Australia (e-mail: wei.ni@csiro.au).}
\thanks{R. P. Liu is with the Global Big Data Technologies Centre, University of Technology Sydney, Sydney, NSW 2007, Australia (e-mail: renping.liu@uts.edu.au).}}

\maketitle

\begin{abstract}
License-assisted access (LAA) is a promising technology to offload dramatically increasing cellular traffic to unlicensed bands. Challenges arise from the provision of quality-of-service (QoS) and the quantification of capacity, due to the distributed and heterogeneous nature of LAA and legacy systems (such as WiFi) coexisting in the bands. In this paper, we develop new theories of the effective capacity to measure LAA under statistical QoS requirements. A new four-state semi-Markovian model is developed to capture transmission collisions, random backoffs, and lossy wireless channels of LAA in distributed heterogeneous network environments. A closed-form expression for the effective capacity is derived to comprehensively analyze LAA. The four-state model is further abstracted to an insightful two-state equivalent which reveals the concavity of the effective capacity in terms of transmit rate. Validated by simulations, the concavity is exploited to maximize the effective capacity and effective energy efficiency of LAA, and provide significant improvements of 62.7\%
and 171.4\%, respectively, over existing approaches. Our results are of practical value to holistic designs and deployments of LAA systems.
\end{abstract}

\begin{IEEEkeywords}
Licensed-assisted access (LAA), WiFi, 5G, effective capacity, unlicensed spectrum, semi-Markovian model, statistical quality-of-service.
\end{IEEEkeywords}
\section{Introduction}
\IEEEPARstart{T}{he} past decade has witnessed the explosive growth of mobile traffic stemming from the prevalence of smart handset devices \cite{2}. It is predicted that the mobile traffic will grow astoundingly 1000-fold by 2020 \cite{cisco2014global,xin2}. The scarcity of spectrum becomes the bottleneck of this growth in fifth-generation (5G) networks, and one of the solutions is widely believed to be the unlicensed spectrum \cite{7,xin1}. Recently, the Third Generation Partnership Project (3GPP) standardization group has specified license-assisted access (LAA) to the unlicensed band, coexisting with legacy systems such as IEEE 802.11 WiFi \cite{3}. The design goal is to comply with any regional regulatory requirements, while achieving effective and fair coexistence between LAA and WiFi networks. Contention-based access techniques, exploiting Listen-before-talk (LBT), have been specified to alleviate the intrusion of LAA to WiFi \cite{3}. Although its early versions have already been standardized in 3GPP Release 13 for Long Term Evolution (LTE), LAA will remain a key topic of 5G. As a matter of fact, 3GPP Release 15 has itemized ``new radio (NR) based unlicensed access'' and ``Enhancements to LTE operation in unlicensed spectrum'', where evolutions of LAA will be standardized to allow 5G to access the unlicensed spectrum \cite{NR,NR1}.

A prominent challenge arising is to provide quality-of-service (QoS) to LAA in 5G distributed heterogeneous network environments \cite{3}. (In contrast, WiFi does not ensure precise QoS~\cite{Zhu2004}. The latest versions of WiFi, such as EDCA, claimed to have incorporated QoS, essentially provide relative priorities, and cannot guarantee QoS). This is because there is typically no policy to regulate the deployment of wireless transmitters in the unlicensed band. LAA base stations (LAA-BSs) can be deployed in an ad-hoc fashion \cite{3}. LAA-BSs need to contend with the WiFi systems and other randomly deployed LAA-BSs for transmissions. The delay of LAA traffic could be prolonged and the minimum data rate could be violated, both due to repeated collisions and subsequent retransmissions. The delay and the minimum data rate would also deteriorate as the nodes increase, due to intensifying transmission collisions. The distributed nature, ad-hoc deployment and stringent QoS requirements also pose a challenge to the comprehensive analysis of LAA. The analysis is important to quantify the capacity of LAA. It is of practical value to design and optimize LAA system parameters, e.g., transmit power and contention window (CW).

With the prevalence of WiFi in the unlicensed band, the coexistence between LAA and WiFi is a prominent issue to be addressed. A lot of studies have been conducted on the coexistence of LAA and WiFi in the unlicensed band. Earlier designs, such as almost blank sub-frame (ABS) \cite{12}, duty cycle \cite{9}, and interference avoidance \cite{Zhang2015}, were rigid and intrusive to WiFi. Exploiting LBT, recent designs have substantially reduced the intrusions \cite{10,18,16}. Some of the designs display strong resemblance and behave friendly to WiFi \cite{3,Voicu2016,15,17, Wang2016,Han2016,Yin2016}. However, the modeling and analysis of these WiFi-friendly designs have been to date focused on throughput with little consideration on QoS. On the other hand, effective capacity, quantifying the maximum arrival rate at the input of a First-In-First-Out (FIFO) buffer while guaranteeing the QoS at the output, has been developed to measure loss-less queueing systems \cite{20}. This measure has been recently extended to single wireless point-to-point links \cite{20}, \cite{21}, centralized wireless networks \cite{jin2015resource} and simplified WiFi networks with error-free wireless channels \cite{28}. However, these are inapplicable to LAA which is part of a distributed heterogeneous network with lossy wireless channels. To the best of our knowledge, the effective capacity is yet to be established for LAA.

In this paper, we establish a new theoretical framework to quantify the effective capacity of LAA under statistical QoS constraints. A new four-state semi-Markovian model is proposed to precisely capture transmission collisions, random backoffs, and lossy wireless channels of LAA in distributed heterogeneous network environments. A closed-form expression is derived to quantify the effective capacity of a LAA user equipment (LAA-UE) against its QoS requirements, instantaneous transmit rate, and the numbers of LAA-BSs and WiFi devices. Further, we prove the four-state model is equivalent to an abstract two-state semi-Markovian model which, in turn, reveals the concavity of the effective capacity in terms of transmit rate. By exploiting the concavity, the effective capacity and the effective energy efficiency can be maximized, demonstrating the value of the new theoretical framework to practical designs and deployments of LAA systems. Corroborated by simulations, our framework is able to accurately measure LAA systems, and also substantially improve the effective capacity and effective energy efficiency of the systems.

Our key contributions can be summarized as follows:
\begin{enumerate}
  \item New closed-form expressions to evaluate the effective capacity of LAA against the QoS, instantaneous transmit rate, and the number of WiFi and LAA devices is theoretically derived by developing a new four-state semi-Markovian model, which captures transmission collisions, random backoff and lossy wireless channels in distributed heterogeneous networks.
  \item The concavity of the effective capacity of LAA is revealed and proved.
  \item The concavity is exploited to maximize the effective capacity and the effective energy efficiency, providing significant improvements of 62.7\%
and 171.4\% over the existing approaches, respectively. The results are of practical value to holistic designs and deployments of LAA systems.
\end{enumerate}

The rest of this paper is organized as follows. In Section II, related works are reviewed. In Section III, the system model is described. In Section IV, we establish the theoretical framework to analyze the effective capacity of LAA and uncover its concavity, followed by the applications of the framework to the designs of LAA systems in Section V. Numerical results are provided in Section VI, followed by conclusions in Section VII.

\section{Related Works}

The coexistence of LAA and WiFi in the unlicensed band has recently drawn extensive attention. Earlier LAA designs, such as ABS \cite{12}, duty cycle \cite{9}, and interference avoidance \cite{Zhang2015}, were rigid and intrusive to WiFi. Furthermore, designs of ABS and transmit power was studied to improve the robustness of WiFi to LAA in the unlicensed band in \cite{13}. In \cite{11}, Q-learning was employed to dynamically configure the duty cycle of LAA transmissions, adapting to the density of WiFi devices; while in \cite{19}, the energy efficiency was maximized under the duty cycle. These works may alleviate the intrusion of LAA to WiFi to some extent.

Exploiting LBT, recent LAA approaches have become more friendly to WiFi, where a LAA device senses the unlicensed band before transmission. Some of the approaches require the LAA devices to transmit immediately after the band is sensed free \cite{10,18,16}. For example, routing and resource allocation were jointly designed to support non-real-time and real-time applications in the downlink of a cloud radio access network \cite{18}. The uplink was also studied, yet under the assumption of a simplified on/off WiFi interference model \cite{16}. These approaches are still intrusive in the sense that the LAA devices are given priority over WiFi devices for every transmission opportunity.

Other LBT based approaches allow the LAA devices to randomly delay (or back off) transmissions, even if the unlicensed band is sensed free \cite{3,Voicu2016,15,17, Wang2016,Han2016,Yin2016}. Resembling to WiFi, these approaches enable WiFi and LAA devices to contend in a fair fashion. To this end, they are less intrusive and WiFi-friendly. In \cite{Voicu2016}, a comparison study was conducted between the approaches using duty cycle and LBT, and revealed the effectiveness of LBT in the case of strong interference. In \cite{15}, such an approach was modeled as a Markov chain, and the throughput was evaluated. Stochastic geometry was applied to analyze the medium access probability of a LAA-BS in \cite{17}, and the asymptotic coverage probability and throughput of WiFi and LAA networks in \cite{Wang2016} under a simplified Carrier Sense Multiple Access with Collision Detection (CSMA/CA) model without exponential backoffs or the dynamics of the timer history. In \cite{Han2016}, a LBT-based MAC protocol was developed, where the transmission durations of LAA devices were optimized to maximize the system throughput. In \cite{Yin2016}, the CW size was designed to be adjustable, adapting to the rate requirements of LAA-UEs and the collision probability. However, none of these works have taken QoS, particularly, delay, into account.

In a different yet relevant context, the effective capacity was developed to measure queueing systems, where QoS is characterized statistically \cite{20}. The effective capacity was extended to a single collision-free point-to-point wireless link \cite{20}, \cite{21}, and a collision-free centralized wireless network \cite{jin2015resource}. In \cite{kontovasilis1997effective}, a semi-Markovian server model was developed in a loss-less environment, based on which the effective bandwidth was proved to satisfy that the spectral radius of an appropriate nonnegative matrix is equal to unity. This result was extended to a simplified homogeneous WiFi network with error-free channels in \cite{28}, where a semi-Markovian model, expanding the classical Markov chain of WiFi \cite{840210}, generated the aforementioned nonnegative matrix (as specified in \cite{kontovasilis1997effective}). Unfortunately, the semi-Markovian model is unable to capture lossy wireless channels which require distinctive definitions of states and model structures. Moreover, the extension of the model of \cite{840210} to heterogeneous LAA networks is non-trivial.

In this paper, the effective capacity is derived to capture the distinctive properties in the distributed and heterogeneous network environments in the unlicensed band. The properties include transmission collisions, exponential backoff, and lossy wireless channels. To the best of our knowledge, the effective capacity has never been studied heretofore.

\section{System Model}

Consider $N$ LAA-BSs, and $M$ WiFi nodes, all operating in an unlicensed frequency band with a bandwidth of $B$ (in Hertz). All nodes are randomly placed, since both the LAA-BSs and WiFi access points (WiFi-APs) can be deployed in an uncoordinated, ad-hoc manner. We assume that there is no hidden node problem. Nevertheless, this assumption can be lifted, as will be discussed in Section IV.

As specified in 3GPP TR 36.889 \cite{3}, two different channel access schemes are considered for LAA, i.e., LBT with random back-off in a fixed CW, and LBT with random back-off in an exponentially increasing CW, as illustrated in Fig. 1(a) and 1(b), respectively. These two methods are referred to  as FCW (Fixed CW) and VCW (Variable CW), respectively.

In the case of FCW, a LAA-BS senses the unlicensed band for a predefined period, termed ``channel clear assess (CCA)'', whenever it has packets to transmit. If the channel is free over CCA, the LAA-BS sets an integer backoff timer randomly and uniformly within a fixed CW $[0,W_L)$, where $W_L$ is the initial CW size. The backoff timer counts down one per timeslot. It freezes if the channel is busy, and does not resume until the channel is sensed free for CCA again. Once the backoff timer turns to zero, a (re)transmission is triggered.
If the (re)transmission is collided (i.e., no acknowledgment (ACK) is returned), the LAA-BS resets the backoff timer within $[0,W_L)$ to retransmit the packet.

In the case of VCW, exponential backoff is adopted on top of FCW, where the CW doubles, each time the (re)transmission of the LAA-BS is collided (i.e., no ACK is returned). $K_L$ is the maximum number of retransmissions per packet, after which the CW is reset to $W_L$. In this sense, VCW is analogous to the distributed coordination function (DCF) of WiFi~\cite{840210}. However, the LAA-BS resets the CW to $W_L$, after a collision-free (re)transmission (i.e., neither ACK or non-ACK is returned), as opposed to the DCF. This is due to the fact that the LAA-BS can exploit OFDMA to multiplex signals for multiple LAA-UEs. It is possible that only some of the LAA-UEs succeed and return ACKs after a collision-free (re)transmission~\cite{3}. Resetting the CW can prevent the CW from continuously enlarging and staying large.

\begin{figure}[!t]
\centering
\includegraphics[width=3.5in]{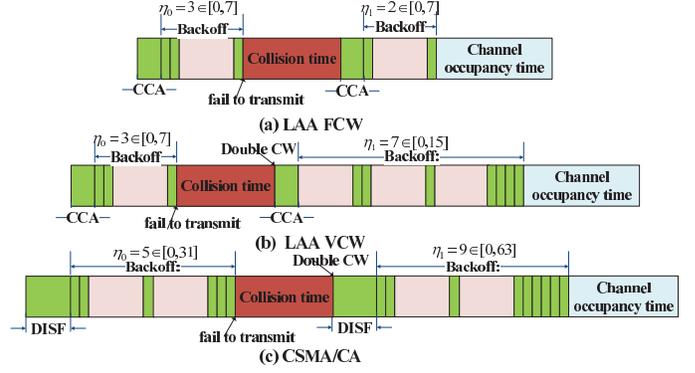}
\caption{Medium Access Mechanism for Unlicensed Spectrum: a) LAA FCW b) LAA VCW c) WiFi CSMA/CA, where the initial CW of LAA is 8, the initial CW of WiFi is 32, and ${\eta _j}$ is the number of backoff timeslots in response to the $j$-th collided (re)transmission, $j=0,\cdots, K_L-1$ (or $K_W-1$).}
\label{fig.1}
\end{figure}

For the WiFi nodes, we consider the DCF which involves both CSMA/CA and binary exponential backoff \cite{22}, as illustrated in Fig. 1(c). The initial CW size of WiFi transmissions is denoted by $W_0$, and the maximum number of retransmissions per packet is $K_W$. The period, for which a WiFi node keeps sensing the free channel before setting a backoff timer, is named ``distributed inter-frame space (DIFS)". We assume that WiFi has no QoS requirements, and all the WiFi nodes are bi-directional.

Consider the downlink. The overall transmit power of a LAA-BS is $P_{\rm tot}$. The power is allocated to $K$ LAA-UEs associated with the LAA-BS. The transmit power to LAA-UE $k$ is $P_k$. For illustration convenience, we consider that each LAA-UE $k$, $k=1,\cdots,K$, is evenly allocated a subband with the bandwidth of $\frac{B}{K}$. As a result, the instantaneous transmit rate $R_k$ of LAA-UE $k$ is given by
\begin{equation}\label{eq11}
{R_k} = \frac{B }{K}{\log _2}(1 + \frac{G_kP_k}{\sigma^2} ),
\end{equation}
where $G_k$ is the channel gain of LAA-UE $k$, and $\sigma^2$ is the noise power. It is noteworthy that the effective capacity is on a user basis, and can be quantified given the QoS requirement of a user, and the transmit power and bandwidth allocated to the user. The bandwidths can be unequal among users.

QoS provisioning is crucial to LAA systems which are integral part of the 5G networks. LAA-UEs can have different QoS requirements, such as end-to-end delay comprised of the queueing delay and transmission delay. A set of FIFO queues are used to buffer data traffic destined for different LAA-UEs, one queue per user. This is reasonable in the presence of QoS, since the backlogs of the FIFO queues indicate the queuing delays of the users. Considering the distributed network environment of LAA. The QoS can be characterized statistically by employing the QoS exponent ${\theta _k}$, $k=1,\cdots,K$, as given by \cite{20,Chang1994}
\begin{equation}\label{1a1}
\theta_k= - \mathop {\lim }\limits_{{Q_k^{\rm th}} \to \infty } \frac{{\log (\Pr \{ Q_k(\infty ) > {Q_k^{\rm th}}\} )}}{{{Q_k^{\rm th}}}},
\end{equation}
where $Q_k(t)$ is the length of the FIFO queue at the corresponding LAA-BS to buffer the downlink traffic for LAA-UE $k$ at time $t$, $Q_k^{\rm th}$ is the threshold of the queue length specified for the traffic, and $\Pr \{ Q_k(\infty ) > {Q_k^{\rm th}}\}$ is the buffer-overflow probability. In this sense, $\theta_k$ provides the exponential decaying rate of the probability that the threshold is exceeded.

The effective capacity of LAA-UE $k$, denoted by $C_k(\theta_k)$, specifies the maximum, consistent, steady-state arrival rate at the input of the FIFO queue, as given by \cite{20,21}
\begin{equation}\label{effective capacity}
C_k(\theta_k ) = - \mathop {\lim }\limits_{t \to \infty } \frac{1}{{\theta_k t}}\log (\mathbb{E}\big\{{e^{ - \theta_k S_k(t)}}\big\} ).
\end{equation}
where $S_k(t)$ is the number of bits successfully delivered to LAA-UE $k$ during $(0,t]$, and $\mathbb{E}\{\cdot\}$ denotes expectation.

Note that our proposed model is unrestricted to a particular link direction, and can be readily applied the uplink. This is because traffic flows of the uplink
and downlink are separately handled and processed in the physical and MAC layers (although
the contents of the flows might be relevant in the application layer). The QoS requirements are decoupled between the uplink and downlink, and so would be the analysis based on our model.

\section{Analysis of The Effective Capacity of LAA}
In this section, we analyze the effective capacity of LAA-UE $k$, given the QoS exponent $\theta_k$, instantaneous transmit rate $R_k$ and the number of LAA and WiFi devices $N$ and $M$. First, we put forward a new theorem to characterize the effective capacity, as follows.

\begin{thm}\label{Theorem 1}
The transmission collisions, random backoffs, and lossy wireless channels of LAA can be precisely characterized by a four-state semi-Markovian model. Given $\theta_k>0$, $k=1,\cdots,K$, the effective capacity of LAA-UE $k$, $C_k$, satisfies
\begin{equation}\label{eq29}
\begin{aligned}
(1 - p_L^{{K_L}})&{{\hat t}_1}({e^{{\theta _k}{C_k}}})\Big[{e^{( - {R_k}{\theta _k} + {\theta _k}{C_k}){T_{\rm f}}}}(1 - {\varepsilon _k})\\
& + {e^{{\theta _k}{C_k}{T_{\rm f}}}}{\varepsilon _k}\Big] + p_L^{{K_L}}{{\hat t}_2}({e^{{\theta _k}{C_k}}}) = 1,
\end{aligned}
\end{equation}
where $\hat{t}_{1}(\cdot)$ and $\hat{t}_{2}(\cdot)$ are the probability generation functions (PGFs) of the total durations of backoffs for a delivered packet and those for a dropped packet, respectively; $p_L$ is the collision probability of the corresponding LAA-BS; $T_{\rm f}$ is the duration of a collision-free (re)transmission of the LAA-BS; $\varepsilon_k$ is the packet error rate (PER) of collision-free (re)transmissions of the LAA-BS to LAA-UE $k$.
\end{thm}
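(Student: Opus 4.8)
The plan is to show that, under the standard decoupling assumption that the per-attempt collision probability $p_L$ and the collision-free PER $\varepsilon_k$ are stationary, each packet's lifetime forms an i.i.d.\ regeneration cycle, so that $S_k(t)$ is a renewal--reward process. Regeneration holds because the contention window is reset to $W_L$ at every cycle boundary---whether the cycle ends in a collision-free (re)transmission or in a drop after all $K_L$ attempts collide---so the within-cycle dynamics, including the exponentially growing CW of VCW, repeat independently across cycles. I would then invoke the semi-Markov effective-bandwidth characterization of \cite{kontovasilis1997effective}, namely that $C_k$ is the value making the spectral radius of the associated nonnegative matrix equal to unity. Because the four-state process regenerates through a single embedded instant each cycle, this matrix condition collapses to the scalar renewal--reward cumulant identity
\begin{equation}\label{renewal}
\mathbb{E}\big[e^{-\theta_k R_{\rm cyc}+\theta_k C_k\tau_{\rm cyc}}\big]=1,
\end{equation}
in which $R_{\rm cyc}$ and $\tau_{\rm cyc}$ denote the bits delivered and the time elapsed in one cycle. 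Indeed \eqref{renewal} follows from the renewal--reward identity $\mathbb{E}[e^{sR_{\rm cyc}-\Lambda(s)\tau_{\rm cyc}}]=1$, with $\Lambda(s)=\lim_{t\to\infty}\frac{1}{t}\log\mathbb{E}[e^{sS_k(t)}]$, by setting $s=-\theta_k$ and noting that $\Lambda(-\theta_k)=-\theta_k C_k$ is exactly the definition \eqref{effective capacity}.

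Next I would evaluate \eqref{renewal} by conditioning on the terminal mode of the cycle, which is precisely what the four-state model enumerates: a collision-free success, a collision-free error, a collided attempt, or a drop. The cycle ends in a drop iff all of its (at most $K_L$) attempts collide, an event of probability $p_L^{K_L}$; such a cycle delivers $R_{\rm cyc}=0$ bits, and its overhead duration---the backoff intervals together with the $K_L$ collided transmissions---contributes $\hat t_2(e^{\theta_k C_k})$, giving the term $p_L^{K_L}\hat t_2(e^{\theta_k C_k})$. With the complementary probability $1-p_L^{K_L}$ the cycle instead reaches a collision-free (re)transmission; its pre-transmission overhead contributes $\hat t_1(e^{\theta_k C_k})$, after which the decisive transmission of fixed length $T_{\rm f}$ either succeeds (probability $1-\varepsilon_k$), delivering $R_kT_{\rm f}$ bits and yielding the factor $e^{(-R_k\theta_k+\theta_k C_k)T_{\rm f}}$, or errs (probability $\varepsilon_k$), delivering nothing and yielding $e^{\theta_k C_k T_{\rm f}}$. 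Summing the two branches reproduces \eqref{eq29}.

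The step I expect to be the main obstacle is the rigorous reduction of the spectral-radius condition to the scalar identity \eqref{renewal} under VCW, where the per-attempt backoff laws differ within a cycle. I must verify that the cycles remain mutually independent and identically distributed---which again rests on the CW being reset at each boundary---and that all of this within-cycle heterogeneity is absorbed into $\hat t_1$ and $\hat t_2$. A closely related bookkeeping point is partitioning $\tau_{\rm cyc}$ cleanly: the durations of the collisions preceding a collision-free attempt, and of the $K_L$ collisions of a dropped packet, must be embedded inside $\hat t_1$ and $\hat t_2$ respectively, so that the only transmission interval appearing explicitly in \eqref{eq29} is the single decisive $T_{\rm f}$ and nothing is double counted.
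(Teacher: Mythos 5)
Your derivation is correct and lands exactly on (\ref{eq29}), but it reaches it by a different route than the paper. The paper's proof builds the four-state model explicitly, forms $\mathbf{H}(c,u)=\mathbf{P}\mathbf{\Gamma}(c,u)$, and obtains (\ref{eq29}) by expanding the $4\times 4$ characteristic determinant in (\ref{square matrix1}) and setting the Perron root $\phi(-\theta_k,-\theta_kC_k)=1$; you instead observe that the CW reset at every packet boundary makes packet lifetimes i.i.d.\ regeneration cycles, replace the spectral-radius condition by the scalar cumulant identity $\mathbb{E}\{e^{-\theta_kR_{\rm cyc}+\theta_kC_k\tau_{\rm cyc}}\}=1$, and evaluate it by conditioning on the cycle outcome (drop with probability $p_L^{K_L}$, else a decisive $T_{\rm f}$-long transmission that succeeds or errs). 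The ``collapse'' you flag as the main obstacle is in fact immediate from the model's structure: the rows of $\mathbf{P}$ for the ON, OFF$_1$ and OFF$_3$ states are identical, so the chain regenerates at every packet boundary and the quartic in (\ref{square matrix1}) degenerates at $\phi=1$ to precisely your scalar equation; alternatively one can bypass the matrix entirely via the standard renewal--reward characterization of the asymptotic log-MGF for cumulative processes with i.i.d.\ cycles, which applies here because $t_1$ and $t_2$ are bounded (finitely many attempts, bounded backoff counters, bounded slot durations), so all MGFs exist and the root is unique. Your VCW worry is resolved by the reset rule of Section III, which is exactly what makes the cycles identically distributed with the within-cycle CW doubling absorbed into $\hat t_1,\hat t_2$ as in (\ref{gy14}) and (\ref{gy15}). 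What the paper's longer route buys is the explicit matrix machinery reused in Corollary 1; note that your per-packet cycle is a finer regeneration than Corollary 1's per-success cycle, so you reach (\ref{eq29}) directly without the geometric-series summation in (\ref{gyb}), and each term of (\ref{eq29}) acquires a transparent probabilistic reading. One small wording slip: ``a collided attempt'' is not a terminal mode of a cycle (it is interior to $t_1$ or $t_2$); your actual computation correctly conditions on the three terminal outcomes only.
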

\begin{proof}
Any LAA-BS, such as the LAA-BS associated with LAA-UE $k$, experiences four possible states, including $(1)$ the collision-free successful (re)transmission of a packet; $(2)$ the collision-free yet unsuccessful (re)transmission of a packet (resulting from the lossy wireless channel of LAA-UE $k$); $(3)$ the backoffs and collided (re)transmissions of such a packet until its collision-free (re)transmission; and $(4)$ the backoffs and collided (re)transmissions of a packet that exhausts all retransmissions with collisions.

The LAA-BS transits between the four states. The LAA-BS can tell the first state from the second state based on the ACK/NACK. If an ACK is returned from LAA-UE $k$, the (re)transmission is collision-free and successful; if a NACK is returned from LAA-UE $k$, the (re)transmission is collision-free yet unsuccessful. All the backoffs and collided (re)transmissions of the packet prior to the collision-free (re)transmission belong to the third state. In the case that all the (re)transmissions of a packet are collided, the LAA-BS can become aware of this since no ACK/NACK is returned. Such classification of states can fully capture the behaviors of the LAA-BS, as well as the impact of the distributed wireless environment on the LAA-BS.

A new four-state semi-Markovian model can be developed to precisely characterize the behavior of the LAA-BS associated with LAA-UE $k$ in response to the distributed wireless environment. As shown in Fig. \ref{fig.2}, the ON state corresponds to the successful transmission of a packet to LAA-UE $k$. The OFF$_1$ state corresponds to the collision-free yet unsuccessful transmission of a packet, due to the lossy wireless channel. The OFF$_2$ state corresponds to the backoffs and collided (re)transmissions of a packet before its collision-free (re)transmission. The OFF$_3$ state corresponds to the backoffs and (re)transmissions of a packet that exhausts all (re)transmissions with collisions and hence drops.

The transition probabilities between the four states are also given in Fig. \ref{fig.2}. Here, the transition probability from the ON, OFF$_1$, or OFF$_3$ state to the OFF$_2$ state is $(1 - p_L^{{K_L}})$ as is the probability that the next packet of the LAA-BS gets transmitted collision-free. $p_L^{K_L}$ is the packet drop probability after $K_L$ collided retransmissions. The transition probability from the ON, OFF$_1$, or OFF$_3$ state to the OFF$_3$ state is $p_L^{{K_L}}$, as is the probability that the next packet of the LAA-BS exhausts all $K_L$ (re)transmissions with collisions. The OFF$_2$ state can transit to the ON and OFF$_1$ states at the probabilities of $(1-{\varepsilon _k})$ and ${\varepsilon _k}$, respectively, depending on the channel condition between the LAA-BS and LAA-UE $k$.

The transition probability matrix of the four-state semi-Markovian model is given by
\begin{equation}
{\bf{P}} = \left[ {\begin{array}{*{20}{c}}
0&0&\varepsilon_k &{1 - \varepsilon_k }\\
{1 - p_L^{{K_L}}}&{p_L^{{K_L}}}&0&0\\
{1 - p_L^{{K_L}}}&{p_L^{{K_L}}}&0&0\\
{1 - p_L^{{K_L}}}&{p_L^{{K_L}}}&0&0
\end{array}} \right],
\end{equation}
where the rows (and columns) are structured as such that from top to bottom (and from left to right) are the OFF$_2$, OFF$_3$, OFF$_1$ and ON states. This is to facilitate evaluating the non-negative irreducibility of a subsequent matrix, as will be noted later.

Both the durations of the ON and OFF$_1$ states are $T_{\rm f}$, the transmission duration of a packet. The durations of the OFF$_2$ and OFF$_3$ states are assumed as ${t_1}$ and ${t_2}$, respectively. Each of them consists of backoffs and (re)transmissions of the current packet until the collision-free (re)transmission of the packet. In this sense, $t_1$ and $t_2$ consist of three types of timeslots: idle timeslots, the timeslots where there is a collision-free (re)transmission from either a WiFi node or another LAA-BS, and the timeslots where there is a collision  between the (re)transmissions of WiFi nodes and LAA-BSs. Note that the collided (re)transmissions of the designated LAA-BS are part of $t_1$ and $t_2$ while the collision-free (re)transmissions of the designated LAA-BS are not. $t_1$ and $t_2$ are random, depending on the number of (re)transmissions and the randomly selected backoff timer per (re)transmission.

The moment generating functions (MGFs) of $t_1$ and $t_2$ are ${M_1}(t) = \hat t_1({e^t})$ and ${M_2}(t) = \hat t_2({e^t})$, respectively, exploiting the property of PGF. The MGFs of an unsuccessful and successful collision-free transmissions are ${M_{on}}(t) =M_{OFF_1}= {e^{t{T_{\rm f}}}}$.

\begin{figure}[!t]
\centering
\includegraphics[width=3.5in]{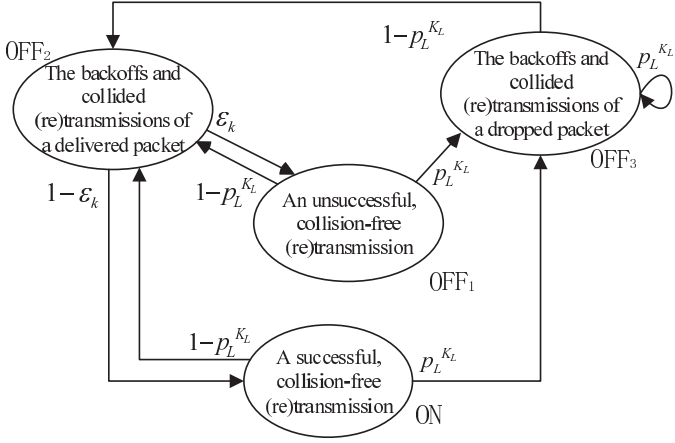}
\caption{The proposed four-state semi-Markovian model for characterizing the transmission process of a LAA-BS.}
\label{fig.2}
\end{figure}

With reference to \cite{kontovasilis1997effective,28}, we define two auxiliary variables, namely, $c$ and $u$, and construct a diagonal matrix $\mathbf{\Gamma} (c,u)$. The diagonal elements of $\mathbf{\Gamma}(c,u)$ are the MGFs of the four-state semi-Markovian model, as given by
\begin{equation}\label{eq30}
\begin{array}{l}
{\bf{\Gamma }}(c,u)=\\
 \left[ {\begin{array}{*{20}{c}}
{{M_1}( - u)}&0&0&0\\
0&{{M_2}( - u)}&0&0\\
0&0&{{M_{on}}( - u)}&0\\
0&0&0&{{M_{on}}({R_k}c - u)}
\end{array}} \right]
\end{array}.
\end{equation}
For each permissible pair of $c$ and $u$, we can write ${\bf{H}}(c,u) = {\mathbf{P}}\mathbf{\Gamma} (c,u)$\footnote{Actually this can be also written as $\bf{H}(c,u) = \mathbf{\Gamma} (c,u){\mathbf{P}}$, because the eigenvalue of both are the same.}, as given in (\ref{201611151}).
\newcounter{mytempeqncnt}
\begin{figure*}
\normalsize
\setcounter{mytempeqncnt}{\value{equation}}
\setcounter{equation}{6}
\begin{equation} \label{201611151}
{\bf{H}}(c,u) = {\mathbf{P}}\mathbf{\Gamma} (c,u)=
\left[ {\begin{array}{*{20}{c}}
0&0&{\varepsilon_k {M_{on}}( - u)}&{(1 - \varepsilon_k ){M_{on}}({R_k}c - u)}\\
{(1 - p_L^{{K_L}}){M_1}( - u)}&{p_L^{{K_L}}{M_2}( - u)}&0&0\\
{(1 - p_L^{{K_L}}){M_1}( - u)}&{p_L^{{K_L}}{M_2}( - u)}&0&0\\
{(1 - p_L^{{K_L}}){M_1}( - u)}&{p_L^{{K_L}}{M_2}( - u)}&0&0
\end{array}} \right].
\end{equation}
\setcounter{equation}{\value{mytempeqncnt}}
\setcounter{equation}{7}
\begin{equation} \label{square matrix1}
\begin{aligned}
\big |{\bf{H}}( - {\theta _k},&- {\theta _k}{C_k}) - \phi ( - {\theta _k}, - {\theta _k}{C_k}){\bf{I}}\big |\\
&= \left| {\begin{array}{*{20}{c}}
{ - \phi ( - \theta_k , - \theta_k {C_k}) }&0&{\varepsilon_k {M_{on}}( - u)}&{(1 - \varepsilon_k ){M_{on}}({R_k}c - u)}\\
{(1 - p_L^{{K_L}}){M_1}( - u)}&{p_L^{{K_L}}{M_2}( - u) - \phi ( - \theta_k , - \theta_k {C_k}) }&0&0\\
{(1 - p_L^{{K_L}}){M_1}( - u)}&{p_L^{{K_L}}{M_2}( - u)}&{ - \phi ( - \theta_k , - \theta_k {C_k}) }&0\\
{(1 - p_L^{{K_L}}){M_1}( - u)}&{p_L^{{K_L}}{M_2}( - u)}&0&{ - \phi ( - \theta_k , - \theta_k {C_k}) }
\end{array}} \right|\\
& = \phi {( - {\theta _k}, - {\theta _k}{C_k})^4} - \phi {( - {\theta _s}, - {\theta _k}{C_k})^2}{M_1}({\theta _k}{C_k})(1 - p_L^{{K_L}})\left( {{e^{( - {\theta _k}{R_k}{\rm{ + }}{\theta _k}{C_k}){{\rm{T}}_{\rm f}}}}(1 - {\varepsilon _k}) + {e^{({\theta _k}{C_k}){{\rm{T}}_{\rm f}}}}{\varepsilon _k}} \right)\\
&~~~- \phi {( - {\theta _k}, - {\theta _k}{C_k})^3}p_L^{{K_L}}{M_2}({\theta _k}{C_k})= 0.
\end{aligned}
\end{equation}
\setcounter{equation}{\value{mytempeqncnt}}

\hrulefill %
\vspace*{4pt}
\end{figure*}

\setcounter{equation}{8}

Note that $\mathbf{H}(c,u)$ is non-negative irreducible, as it cannot be rearranged to an upper-triangular matrix, e.g., by using the Gaussian-Newton method. As a result, the spectral radius of $\mathbf{H}(c,u)$, denoted $\phi (c,u) = \rho \big(\mathbf H(c,u)\big)$ is a simple eigenvalue of $\mathbf{H}(c,u)$, where  $\rho(\cdot)$  denotes spectral radius.

By [29, Theorem 3.1], given $c \leq 0$, there exists a unique $u^*(c)$ such that $ \phi(c,u^*(c))=1$ and $ \mathop {\lim }\limits_{t \to \infty } \frac{1}{t}\log ({\mathbb{E}}\{ {e^{ {c}{S_k}(t)}}\} )=u^*(c)$.
By [29, Theorem 3.2], the effective capacity $C_k = \frac{u(c)}{c}$ when $\phi(c,u(c))=1$ and $\theta_k = -c$.
As a result, the effective capacity $C_k$ can be evaluated by solving $\phi ( - \theta_k , - \theta_k {C_k}) = 1$ for $\theta_k  > 0$ \cite{28}. Since $\phi ( - \theta_k , - \theta_k {C_k})$ is an eigenvalue of $\mathbf{H}( - \theta_k , - \theta_k {C_s})$ , we can have (\ref{square matrix1}), where $\mathbf{I}$ is the identity matrix, and $|\cdot|$ stands for determinant.

Finally, we substitute $\lambda=\phi(-\theta_k,-\theta_kC_k)=1$ into (\ref{square matrix1}), and obtain (\ref{eq29}). This concludes the proof.
\end{proof}

As an input to the proposed semi-Markovian model, $p_L$ can be calculated in prior in the absence of hidden nodes, as described in Appendix A. The hidden node problem would only affect the value of the input. It would not affect our model and the way that the model analyzes the effective capacity of LAA-BSs, given $p_L$. In the presence of hidden nodes, $p_L$ can be calculated by extending existing studies on WiFi hidden node~\cite{Ekici2008,Jang2012}, as briefly discussed in Appendix A.

The PGFs, $\hat t_1(z)$ and $\hat t_2(z)$, can be derived, as required in Theorem 1. By the law of total expectation \cite{mendenhall2012introduction}, $\hat t_1(z)$ can be given by
\begin{equation}\label{eq12}
{\hat t_1}(z) = {\mathbb{E}_i}\big \{\mathbb{E}\{{z^{t_1^{(i)}}}\big |i \leq {K_L}-1\}\big\},
\end{equation}
where $i$ is the number of collisions as per a packet; $\mathbb{E}_i\{\cdot\}$ takes the expectation over $i\leq K_L$; $t_1^{(i)}$ accounts for the $i$ collided (re)transmission processes and can be written as
\begin{equation}\label{gy1122}
t_1^{(i)} = i{T_{\rm c}} + \sum\limits_{d = 1}^{{I_i}} {{\tau_d}}, 0 \le i \le {K_L} - 1,
\end{equation}
where $T_{\rm c}$ is the duration of a collided (re)transmission of the LAA-BS, $I_i$ is the total number of timeslots that the LAA-BS has backed off in response to the $i$ collisions, and $\tau_d$ is the duration of the $d$-th timeslot since the successful (re)transmission of the last packet, $d=1,\cdots,I_i$.

 We can write $I_i$ as
\begin{equation}\label{eq13}
{I_i}{=}\sum\limits_{j = 0}^i {{\eta _j}},   0 \le i \le {K_L} - 1,
\end{equation}
where ${\eta _j}$ is the number of timeslots in response to the $j$-th collided (re)transmission of the LAA-BS. By definition, the PGF of ${I_i}$ is ${\hat I_i}(z) = \prod\limits_{j = 0}^i {{{\hat \eta }_j}(z)}$. \footnote{In the case of FCW, ${\eta _j}$ is uniformly distributed within $[0,{W_L} - 1]$. ${\hat \eta _j}(z)$ is given by
${\hat \eta _j}(z) = \frac{1}{{{W_L}}}\frac{{1 - {z^{{W_L}}}}}{{1 - z}}.$
In the case of VCW, ${\eta _j}$ is uniformly distributed within $[0,{2^j}W_{L } - 1]$. ${\hat \eta _j}(z)$ is ${\hat \eta _j}(z) = \frac{1}{{{2^j}{W_{L }}}}\frac{{1 - {z^{{2^j}{W_{L}}}}}}{{1 - z}}.$}

Consider that $\tau_d$ is independent and identically distributed, and $I_i$ is independent, discrete random variable taking non-negative integer values. We suppress the subscript ``$_d$''. Using the law of total expectation \cite{mendenhall2012introduction}, the PGF of $\sum\limits_{d = 1}^{{I_i}}\tau_d$ can be given by
\begin{equation}\label{eq133}
\begin{aligned}
\mathbb{E}_{I_i}\big\{{z^{\sum\limits_{d = 1}^{{I_i}} {{\tau_d}} }}\big\}&= \mathbb{E}\big\{\mathbb{E}\big\{{z^{\sum\limits_{d = 1}^{{I_i}} {{\tau_d}} }}|{I_i}\big\}\big\} = \mathbb{E}\big\{{(\hat \tau(z))^{{I_i}}}\big\}\\
& = {{\hat I}_i}(\hat \tau(z)) = \prod\limits_{j = 0}^i {{{\hat \eta }_j}(\hat \tau(z))},
\end{aligned}
\end{equation}
where $\hat \tau(z)$ is the PGF of $\tau$, as given in Appendix B.

Substitute (\ref{gy1122}), (\ref{eq13}) and (\ref{eq133}) into (\ref{eq12}),
\begin{equation}\label{gy14}
\begin{aligned}
{{\hat t}_1}(z) &= \frac{{\sum\limits_{i = 0}^{{K_L} - 1} \bigg[{(1 - {p_L})p_L^i \mathbb{E}({z^{t_1^{(i)}}})\bigg]} }}{{1 - p_L^{{K_L}}}}\\
&= \frac{{\sum\limits_{i = 0}^{{K_L} - 1} {\bigg[(1 - {p_L})p_L^i{z^{i{T_{\rm c}}}}\prod\limits_{j = 0}^i {{{\hat \eta }_j}\Big(\hat \tau(z)\Big)}\bigg]} }}{{1 - p_L^{{K_L}}}},
\end{aligned}
\end{equation}
where $p_L$ can be calculated in Appendix A.

When the retransmission attempt $K_L$ is reached, the packet will be dropped. $t_2$ can be written as
\begin{equation}
t_2 = K_L{T_{\rm c}} + \sum\limits_{j = 1}^{{I_{K_L-1}}} {{\tau_j}}.
\end{equation}
As a result, $\hat t_2(z)$ can be given by
\begin{equation}\label{gy15}
{{\hat t}_2}(z) = \mathbb{E}\big\{{z^{t_2}}|i = {K_L}\big\}= {z^{{K_L}{T_{\rm c}}}}\prod\limits_{j = 0}^{{K_L-1}} {{{\hat \eta }_j}(\hat \tau(z))}.
\end{equation}

\textit{Remark:} Theorem 1 provides an accurate analysis for the effective capacity of LAA through the new precise four-state semi-Markovian model. The calculations of the PGFs $\hat{t}_1(z)$ and $\hat{t}_2(z)$, tailored for the theorem, also play an important role in the design and optimization of LAA, as will be shown in Section V-A.  However, (\ref{eq29}) is an implicit function of both $C_k$ and $\theta_k$, which is hardly conducive to revealing the intrinsic connections between $C_k$ and $\theta_k$. To uncover the connections and shed insights, we proceed to develop a more abstract model. Validated by Corollary 1, the abstract model is equivalent to the four-state Markovian model and provides accurate analysis. More importantly, the abstract model provides a key step to reveal the concavity of the effective capacity, which allows the effective capacity to be maximized through structured optimization.

\begin{cor}\label{cor 1}
Given $\theta_k$, the effective capacity of a LAA-BS can be also evaluated by exploiting a two-state semi-Markovian model without compromising modelling accuracy.
\end{cor}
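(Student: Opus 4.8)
The plan is to construct an ON/OFF two-state semi-Markovian model whose spectral-radius-equals-unity condition collapses to exactly (\ref{eq29}), so that it returns the identical effective capacity $C_k$ and is therefore lossless as a reduction. The guiding observation is that, in the four-state chain of Theorem \ref{Theorem 1}, the fate of each packet — successful delivery, collision-free failure, or drop — is decided by mutually independent coin flips: a packet gets a collision-free (re)transmission with probability $(1-p_L^{K_L})$ and then succeeds or fails with probabilities $(1-\varepsilon_k)$ and $\varepsilon_k$, or exhausts all retransmissions and drops with probability $p_L^{K_L}$, none of which depends on the history. Hence the per-packet triples (outcome, sojourn duration, delivered bits) are i.i.d., which is what will make the two-state reduction exact rather than approximate.

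First I would merge OFF$_1$, OFF$_2$ and OFF$_3$ into a single OFF state, keeping the ON state. The ON sojourn absorbs the preceding backoff, so its duration is $t_1+T_{\rm f}$ and it delivers $R_k T_{\rm f}$ bits, giving $M_{\rm ON}(c,u)=e^{(R_k c-u)T_{\rm f}}\hat t_1(e^{-u})$. The OFF sojourn is a mixture of the collision-free-failure branch (duration $t_1+T_{\rm f}$, no bits) and the drop branch (duration $t_2$, no bits), weighted by their conditional probabilities, so $M_{\rm OFF}(c,u)=\frac{(1-p_L^{K_L})\varepsilon_k\, e^{-uT_{\rm f}}\hat t_1(e^{-u})+p_L^{K_L}\hat t_2(e^{-u})}{p_{\rm off}}$ with $p_{\rm off}=(1-p_L^{K_L})\varepsilon_k+p_L^{K_L}$ and $p_{\rm on}=(1-p_L^{K_L})(1-\varepsilon_k)$.

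Next I would form $\mathbf{P}_2$, $\mathbf{\Gamma}_2(c,u)$ and $\mathbf{H}_2(c,u)=\mathbf{P}_2\mathbf{\Gamma}_2(c,u)$ exactly as in the proof of Theorem \ref{Theorem 1}. Because the packet outcome is independent of the current state, both rows of $\mathbf{P}_2$ equal $[\,p_{\rm on}\ \ p_{\rm off}\,]$, so $\mathbf{H}_2$ has two identical rows. Such a matrix has eigenvalues $0$ and its trace, whence the spectral radius is simply $\phi_2(c,u)=p_{\rm on}M_{\rm ON}(c,u)+p_{\rm off}M_{\rm OFF}(c,u)$; moreover $\mathbf{H}_2$ is non-negative irreducible (every entry is positive, so its graph is strongly connected), making this nonzero eigenvalue simple and letting me invoke the same [29, Theorems 3.1 and 3.2] used before. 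Setting $\phi_2(-\theta_k,-\theta_k C_k)=1$, the factor $p_{\rm off}$ cancels the denominator of $M_{\rm OFF}$ and, after substituting $e^{-u}=e^{\theta_k C_k}$ and $e^{-uT_{\rm f}}=e^{\theta_k C_k T_{\rm f}}$, the condition becomes identically (\ref{eq29}). Since both models solve the same equation for $C_k$, they yield the same effective capacity, which is exactly the claimed equivalence.

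The main obstacle is the consistency of the lumping rather than the algebra. I must argue that the backoff duration $t_1$ is statistically independent of whether the ensuing collision-free transmission succeeds or fails — which holds because the OFF$_2\to$ ON and OFF$_2\to$ OFF$_1$ split uses fixed probabilities $1-\varepsilon_k,\varepsilon_k$ that do not depend on $t_1$ — so that assigning the same $\hat t_1$ to the backoff portion of both the ON and the failure branches is legitimate. I also need to confirm that folding the self-looping OFF$_3\to$ OFF$_3$ transitions into repeated i.i.d. OFF visits preserves the sojourn statistics, and that the one zero eigenvalue of $\mathbf{H}_2$ does not invalidate the Perron--Frobenius hypotheses; both follow once the i.i.d.-per-packet structure is established.
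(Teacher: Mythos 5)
Your proof is correct, but your two-state model is not the one the paper builds. The paper keeps the ON state as a bare successful transmission (duration $T_{\rm f}$, strictly alternating chain with ${\bf P}=\left[\begin{smallmatrix}0&1\\1&0\end{smallmatrix}\right]$) and lumps the \emph{entire} gap between two consecutive successes into a single OFF sojourn of duration $t_3$, whose PGF $\hat t_3$ it computes by summing geometric series over the random numbers of dropped and channel-failed packets in the gap (\ref{gyb}); the unity-spectral-radius condition then gives the compact relation (\ref{gya}), which rearranges to (\ref{eq29}). You instead draw the cycle boundaries per packet: the ON sojourn absorbs its own preceding backoff ($t_1+T_{\rm f}$), the OFF sojourn is a mixture of the channel-failure and drop episodes, and the i.i.d. packet outcomes make $\mathbf{P}_2$ rank one, so the spectral radius is just the trace $p_{\rm on}M_{\rm ON}+p_{\rm off}M_{\rm OFF}$, which at $(c,u)=(-\theta_k,-\theta_kC_k)$ is literally the left-hand side of (\ref{eq29}). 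Both reductions are legitimate and both hinge on the same independence facts you flag (the outcome split $(1-\varepsilon_k,\varepsilon_k)$ and the drop probability $p_L^{K_L}$ are history- and $t_1$-independent), so each proves the corollary as stated. What your route buys is a shorter, essentially one-line verification that a two-state model reproduces (\ref{eq29}) exactly, with no geometric-series bookkeeping. What it loses is the object the paper actually wants from this corollary: the aggregated-OFF construction produces $\hat t_3$ and the relation (\ref{gya}), which are the raw material for Theorem~\ref{Theorem 2} (concavity is proved via $F(x)=\log\hat t_3(e^x)+xT_{\rm f}$ and the Lyapunov inequality applied to $t_3$). Your model yields neither, so if you continued to Theorem~\ref{Theorem 2} you would still need to introduce the inter-success gap $t_3$ and derive (\ref{gyb})--(\ref{gya}) anyway. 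Two minor points to tidy up: your irreducibility claim needs $p_{\rm off}>0$ (it fails in the degenerate case $\varepsilon_k=0$, $p_L=0$, where the reduction is trivial anyway), and you should state explicitly that only per-sojourn totals of duration and reward matter for the asymptotic log-MGF, which is what licenses absorbing the backoff $t_1$ into the ON sojourn even though the bits are delivered only during its final $T_{\rm f}$.
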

\begin{proof}
Fig. \ref{fig.3} shows the two-state semi-Markovian model, where the ON state corresponds to a successful (re)transmission of the LAA-BS associated with LAA-UE $k$, and the OFF state indicates the intervals between any two consecutive successful (re)transmissions. The OFF state covers the OFF$_1$, OFF$_2$ and OFF$_3$ states in the four-state semi-Markov model described in the proof of Theorem 1. We proceed to show that the two-state model provides the same analysis of the effective capacity as Theorem 1.

The transition probabilities between the ON and OFF states are one in both directions of the two-state semi-Markov model. This is due to the definition of the states. The transition probability matrix is ${\bf{P}} = \left[ {\begin{array}{*{20}{c}}
{\rm{0}}&{1}\\
1&0
\end{array}} \right]$.

\begin{figure}[!t]
\centering
\includegraphics[width=2.5in]{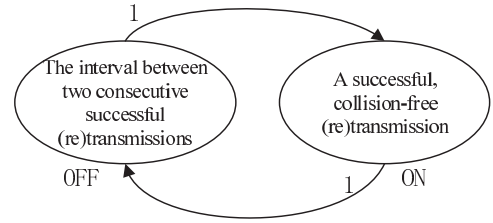}
\caption{The On-Off semi-Markovian model.}
\label{fig.3}
\end{figure}

Let $t_3$ denote the duration of the OFF state. The MGF of $t_3$ is ${M_{3}}(t) = \hat t_3({e^t})$. Also define two auxiliary variables $c$ and $u$, as done in the proof of Theorem 1, and construct $\mathbf{\Gamma} (c,u)$:
\begin{equation}\label{eq30}
\begin{aligned}
\mathbf{\Gamma} (c,u) &= \left[ {\begin{array}{*{20}{c}}
{{M_{3}}(- u)}&0\\
0&{{M_{on}}(R_k  c - u)}
\end{array}} \right]\\
&= \left[ {\begin{array}{*{20}{c}}
{\hat t^3_{off}({e^{ - u}})}&0\\
0&{{e^{(R_k c - u){T_{sl}}}}}
\end{array}} \right].
\end{aligned}
\end{equation}

For each permissible pair of $c$ and $u$, we can write \cite{28}
\begin{equation}
{\bf{H}}(c,u) = {\bf{P}}{\bf{\Gamma }}(c,u)= \left[ {\begin{array}{*{20}{c}}
{\rm{0}}&{{e^{({R_k}c - u){T_{\rm f}}}}}\\
{{{\hat t}_3}({e^{ - u}})}&0
\end{array}} \right].
\end{equation}
The effective capacity of LAA-UE $k$ can be evaluated by solving $\phi ( - \theta_k , - \theta_k {C_k}) = 1$ for $\theta_k > 0$ \cite{28}, where $\phi(c,u)$ is the eigenvalue of $\mathbf{H}(c,u)$.

As an eigenvalue of $\mathbf{H}( - \theta_k , - \theta_k {C_k})$, $\phi(-\theta_k,-\theta_kC_k)$ satisfies
\begin{equation}\label{square matrix}
\begin{aligned}
\big|{\bf{H}}&( - {\theta _k},- {\theta _k}{C_k}) -  \phi ( - \theta_k , - \theta_k {C_k}) {\bf{I}} \big|\\
&= \left[ {\begin{array}{*{20}{c}}
{ -  \phi ( - \theta_k , - \theta_k {C_k}) }&{{e^{({R_k}c - u){T_{\rm f}}}}}\\
{{{\hat t}_3}({e^{ - u}})}&{ -  \phi ( - \theta_k , - \theta_k {C_k}) }
\end{array}} \right]\\
&= { \phi ( - \theta_k , - \theta_k {C_k}) ^2} - {e^{ - {\theta _k}({R_k} - {C_k}){T_{\rm f}}}}{{\hat t}_3}({e^{{\theta _k}{C_k}}})= 0.
\end{aligned}
\end{equation}

Note that $t_3$ can consist of multiple backoffs and (re)transmissions of a number of dropped packets undergoing $K_L$ collided (re)transmissions and a number of collision-free but unsuccessful packets incurring poor channel conditions. Let $G_1$ denote the number of collision-free but unsuccessful packets, and $G_{2,a}$ denote the number of dropped packets between the $(a-1)$-th and $a$-th collision-free (re)transmissions. $a=1,\cdots,G_1$. The $0$-th is the former of the two consecutive successful (re)transmissions spanning $t_3$.

As a result, $t_3$ can be given by
\begin{equation*}
{t_3} = t_{1,G_1+1} + \sum\limits_{b=0}^{{G_{2,G_1+1}}} {{t_{2,b,G_1+1}}} {\rm{ + }}\sum\limits_{a=1}^{{G_1}} {(t_{1,a} + {T_{sl}} + \sum\limits_{b=0}^{{G_{2,a}}} {{t_{2,b,a}}} )},
\end{equation*}
where $t_{1,a}$, $a=1,\cdots,G_1$, is the duration of all the backoffs and collided (re)transmissions for the $a$-th collision-free yet unsuccessful packet; $t_{1,G_1+1}$ is the duration of all the backoffs and collided (re)transmissions for the latter of the two consecutive successful (re)transmissions; $t_{2,b,a}$, $a=1,\cdots,G_1$, $b=0,\cdots, G_{2,a}$, is the duration of all $K_L$ backoffs and collided (re)transmissions for the $b$-th collided packet between the $(a-1)$-th and $a$-th collision-free (re)transmissions.

Also note that $t_{1,a}$, $a=1,\cdots, G_1$, are independent and identically distributed, and all yield the PGF $\hat{t}_1(z)$. Likewise, $t_{2,b,a}$, $a=1,\cdots, G_1$ and $b=0,\cdots, G_{2,a}$, all yield the PGF $\hat{t}_2(z)$. From (\ref{gy14}) and (\ref{gy15}), we have
\begin{equation} \label{gyb}
\begin{aligned}
{{\hat t}_3}&(z)= \mathbb{E}\{{z^{{t_3}}}\}\\
=&(1 - \varepsilon_k )(1 - p_L^{{K_L}}){{\hat t}_1}(z)\sum\limits_{{G_{2,G_1+1}} = 0}^\infty  {{{\left( {p_L^{{K_L}}{{\hat t}_2}(z)} \right)}^{{G_{2,G_1+1}}}}} \times \\
&\sum\limits_{{G_1} = 0}^\infty  {{{\left( {\varepsilon_k (1 - p_L^{{K_L}}){{\hat t}_1}(z){z^{{T_{\rm f}}}}\sum\limits_{{G_{2,G_1}} = 0}^\infty  {{{\left( {p_l^{{K_L}}{{\hat t}_2}(z)} \right)}^{{G_{2,G_1}}}}} } \right)}^{{G_1}}}} \\
=& \frac{{(1 - \varepsilon_k )(1 - p_L^{{K_L}}){{\hat t}_1}(z)}}{{1 - p_L^{{K_L}}{{\hat t}_2}(z) - \varepsilon_k (1 - p_L^{{K_L}}){{\hat t}_1}(z){z^{{T_{\rm f}}}}}},
\end{aligned}
\end{equation}
where the last equality is obtained by using the sum formula for geometric progressions.

Substituting $\phi(-\theta_k,-\theta_kC_k)=1$ into (\ref{square matrix}), we obtain
\begin{equation} \label{gya}
{e^{( - {R_k}{\theta _k} + {\theta _k}{C_k}){T_{\rm f}}}}{{\hat t}_3}({e^{{\theta _k}{C_k}}}) = 1.
\end{equation}
Substituting (\ref{gyb}) into (\ref{gya}), and then rearranging (\ref{gya}), we can finally obtain (6). Corollary 1 is proven.

\end{proof}

Corollary 1 dictates that the two-state semi-Markovian model can accurately capture the effective capacity of LAA. A key step of the two-state semi-Markovian model, i.e., (\ref{gya}), sheds important insights on the design and optimization of LAA. It can be used to reveal the strict concavity of the effective capacity, as stated in the following theorem.
\begin{thm}\label{Theorem 2}
Given $\theta_k>0$, the effective capacity of LAA-UE $k$, $C_k$, is concave in the transmit rate $R_k$ and can be given by
\begin{equation} \label{gyy1}
{C_k} = \frac{{{F^{ - 1}}({R_k}{\theta _k}{T_{\rm f}})}}{{{\theta _k}}},
\end{equation}
where $F(x) = \log (\hat t_3({e^x})) + x{T_{\rm f}}$, and ${F^{ - 1}}(\cdot)$ is the inverse function of $F(\cdot)$.

\end{thm}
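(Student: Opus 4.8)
The plan is to start from equation (\ref{gya}) established in Corollary 1, namely $e^{(-R_k \theta_k + \theta_k C_k)T_{\rm f}} \hat t_3(e^{\theta_k C_k}) = 1$, and convert this implicit relation into the explicit form claimed. First I would take logarithms of both sides and collect the $C_k$-dependent terms on the left. Writing $x = \theta_k C_k$, the relation becomes $\log(\hat t_3(e^x)) + x T_{\rm f} = R_k \theta_k T_{\rm f}$, i.e. $F(x) = R_k \theta_k T_{\rm f}$ with $F$ as defined in the statement. Provided $F$ is invertible, inverting gives $\theta_k C_k = F^{-1}(R_k \theta_k T_{\rm f})$ and hence (\ref{gyy1}). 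This produces the closed-form expression directly, so the substantive content of the theorem is the concavity.

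For concavity, I would observe that $C_k$ is, up to the positive scaling $1/\theta_k$, the composition of $F^{-1}$ with the affine map $R_k \mapsto R_k \theta_k T_{\rm f}$. Since composition with an affine map and multiplication by a positive constant both preserve concavity, it suffices to prove that $F^{-1}$ is concave. A standard calculus fact reduces this further: if $F$ is strictly increasing and convex, then $F^{-1}$ is concave, since $(F^{-1})'' = -F''(F^{-1})/\big(F'(F^{-1})\big)^3 \le 0$. Thus the whole theorem hinges on showing that $F$ is strictly increasing and convex; strict monotonicity also guarantees that $F$ is invertible, so that $F^{-1}$ is well defined.

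The crux, and the step I expect to be the main obstacle if attacked by brute force, is proving convexity of $F(x) = \log(\hat t_3(e^x)) + x T_{\rm f}$: differentiating the explicit rational expression for $\hat t_3$ in (\ref{gyb}) twice would be extremely messy. The clean route is to recognize that $\hat t_3(z) = \mathbb{E}\{z^{t_3}\}$ is the PGF of the nonnegative integer-valued duration $t_3$, so that $\hat t_3(e^x) = \mathbb{E}\{e^{x t_3}\}$ is precisely the MGF of $t_3$ and $\log(\hat t_3(e^x))$ is the associated cumulant generating function. Cumulant generating functions are always convex (by H\"older's inequality, or equivalently because the second derivative equals the variance of $t_3$ under the exponentially tilted law), and adding the linear term $x T_{\rm f}$ preserves convexity. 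For monotonicity, the derivative $F'(x) = \mathbb{E}\{t_3 e^{x t_3}\}/\mathbb{E}\{e^{x t_3}\} + T_{\rm f}$ is a tilted expectation of the nonnegative $t_3$ plus $T_{\rm f} > 0$, hence strictly positive; this delivers both strict monotonicity (so $F^{-1}$ exists) and, combined with convexity, the concavity of $F^{-1}$. I would finish by stitching these together: $F$ increasing and convex $\Rightarrow$ $F^{-1}$ concave $\Rightarrow$ $C_k$ concave in $R_k$.
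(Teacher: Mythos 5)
Your proposal matches the paper's proof essentially step for step: the explicit formula follows by taking logarithms of (\ref{gya}); convexity of $F$ is the convexity of the cumulant generating function of $t_3$, which the paper establishes via the Lyapunov/H\"older inequality applied to $\mathbb{E}\{e^{x t_3}\}$ (exactly your argument in different words); strict monotonicity of $F$ gives invertibility; and concavity of $C_k$ follows because the inverse of an increasing convex function is concave. The only cosmetic difference is that you justify this last implication explicitly via the sign of $(F^{-1})''$, whereas the paper asserts it directly.
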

\begin{proof}
Taking the logarithm at the both sides of (\ref{gya}), we can have
\begin{equation}
\log \Big({{\hat t}_3}({e^{{\theta _k}{C_k}}})\Big) + {\theta _k}{C_k}{T_{\rm f}} = {R_k}{\theta _k}{T_{\rm f}},
\end{equation}
where the left-hand side (LHS) is $F(C_k\theta_k)$. This confirms (\ref{gyy1}).

To prove the concavity of $C_k$ in $R_k$, we first prove that $F(\cdot)$ is convex. This is because
\begin{equation}\label{eq41}
\begin{aligned}
&\alpha F({x_1}) + (1 - \alpha )F({x_2})= \alpha \log ({{\hat t}_3}({e^{{x_1}}})) + \alpha {x_1}{T_{\rm f}} \\
 &~~~+ (1 - \alpha )\log ({{\hat t}_3}({e^{{x_2}}})) + (1 - \alpha ){x_2}{T_{\rm f}}\\
 &= \log (\mathbb{E}{\left( {{e^{{x_1}{t_3}}}} \right)^\alpha }\mathbb{E}{\left( {{e^{{x_2}{t_3}}}} \right)^{1 - \alpha }}) + (\alpha {x_1} + (1 - \alpha ){x_2}){T_{\rm f}}.
\end{aligned}
\end{equation}
Applying Lyapunov inequality \cite{32},
\begin{equation}\label{eq42}
\left(\mathbb{E}\Big\{ {{{({e^{{t_3}}})}^{{x_1}}}} \Big\}\right)^{\alpha }\left(\mathbb{E}\Big\{ {{{({e^{{t_3}}})}^{{x_2}}}} \Big\}\right)^{1 - \alpha } \ge \mathbb{E}\Big\{ {{({e^{{t_3}}})}^{\alpha {x_1} + (1 - \alpha ){x_2}}}\Big\}.
\end{equation}
Substituting (\ref{eq42}) in (\ref{eq41}), we can have
\begin{equation}\label{eq43}
\begin{array}{l}
\alpha F({x_1}) + (1 - \alpha )F({x_2})\\
 \ge \log \left(\mathbb{E}\Big\{{e^{(\alpha {x_1} + (1 - \alpha ){x_2}){t_{3}}}}\Big\}\right) + (\alpha {x_1} + (1 - \alpha ){x_2}){T_{\rm f}}\\
 = F(\alpha {x_1}{\rm{ + }}(1 - \alpha ){x_2}).
\end{array}
\end{equation}
As a result, $F(x)$ is convex.

By the definition of PGF, $\hat t_3(z)$ is strictly monotonically increasing with $z\geq 1$. $e^x$ is also monotonically increasing. Therefore, $\hat t_3({e^x})$ is strictly and monotonically increasing. In turn, $F(x)$ is strictly and monotonically increasing. Given $\theta_k$, ${F^{ - 1}}({R_k}{\theta _k}{T_{\rm f}})$ is concave in $R_k$, and so is ${C_k}$.
\end{proof}

\section{Applications of the effective capacity of LAA}
Our proposed theorems and corollary can have important applications in the design and control of LAA systems.
\subsection{Maximization of Effective Capacity}
One of the applications is to maximize the effective capacity of LAA. Recall that a LAA-BS equally allocates the bandwidth $B$ to $K$ LAA-UEs. The LAA-BS can optimally control its transmit powers for the LAA-UEs, to maximize the total effective capacity, given $\theta_k$.

From (\ref{gyy1}), we show that $C_k$ is a function of $R_k$ and in turn, a function of $P_k$, i.e., $C_k(P_k)$. The maximization of the total effective capacity can be formulated as
\begin{subequations} \label{eq35}
\begin{equation}\label{eq35a} \textbf{{P1:}} \mathop {{\max}}\limits_{{{P_k\,\forall k}}} \sum\limits_{k=1}^K {C_k(P_k)}\end{equation}
\begin{equation}\label{eq35b} ~~~~~~~~~{s.t.} \sum_{k=1}^K {P_k}  \le {P_{\rm tot}},\end{equation}
\begin{equation}\label{eq35c}~~~~~~P_k \ge 0, k=1,\cdots,K.\end{equation}
\end{subequations}
where (\ref{eq35b}) restricts the total transmit power, and (\ref{eq35c}) constrains the transmit powers to be non-negative.

Rewriting (\ref{gyy1}) as ${R_k}{=}\frac{{F(C_k{\theta _k})}}{{{\theta _k}{T_{\rm f}}}}$ and substituting (\ref{eq11}) into it, we can write $P_k$ as a function of $C_k$, as given by
\[
P_k({C_k}) = \frac{{{\sigma ^2}}}{{{G_k}}}\left( {\exp \left( {\frac{{F({C_k}{\theta _k})K\ln 2}}{{B {\theta _k}{T_{\rm f}}}}} \right) - 1} \right),
\]
where $F(\cdot)$ can be explicitly rewritten by substituting (\ref{gyb}), then (\ref{gy14}) and (\ref{gy15}). In other words, the calculations of the PGFs $\hat{t}_1(z)$ and $\hat{t}_2(z)$, tailored for Theorem 1, are important to evaluate and optimize $C_k$. As a result, (P1) is reformulated as
\begin{subequations} \label{eq38}
\begin{equation}\textbf{{P2:}} \mathop {{\max}}\limits_{{{C_k\,\forall k}}} \sum\limits_{k=1}^K {C_k}\end{equation}
\begin{equation}\label{eq38b}~~~~~~~~~~~~~{s.t.} \sum\limits_{k=1}^K {P_k(C_k)}  \le {P_{\rm tot}},\end{equation}
\begin{equation}~~~~~~~~~~~~~~~~~~~~~C_k \ge 0 , k=1,\cdots,K.\end{equation}
\end{subequations}

Exploiting Theorem 2, we can prove that (P2) is convex and holds strong duality.
\begin{proof}
From Theorem 2, $F(x)$ is convex in $x$. By the composition rules of optimization, $P_k(C_k)$ is convex in $C_k$. Given the linear objective and the convex constraints, (P2) is convex.

Further, we can show that the point $C_k^* = 0, k=1,\cdots,K,$ belongs to the feasible region of the problem, i.e.,
\begin{equation}\label{eq48}
\sum\limits_{k=1}^K {P_k(C_k^*)}  < {P_{\rm tot}},
\end{equation}
\begin{equation}\label{eq49}
C_k^* \ge 0 ,k=1,\cdots,K.
\end{equation}
By the Slater's condition \cite{boyd2004convex}, (P2) holds strong duality. This concludes the proof.
\end{proof}

Unfortunately, (\ref{eq38}) cannot be structured to conform to a standard input of popular convex tools, such as MATLAB cvx toolbox, due to $\log\Big(\hat{t}_3(e^{C_k\theta_k})\Big)$ in (\ref{eq38b}). We propose to solve (\ref{eq38}) by taking Lagrange dual decompositions.

The dual problem of (\ref{eq38}) is given by
\begin{equation}\label{eq46}
\mathop{\max }\limits_\mu  \mathop {\min }\limits_{{{\bf{C_k}}}} \mathcal{L}(\mu ,{{C}}_k),
\end{equation}
where $\mu \geq 0$ is the dual Lagrange multiplier for (\ref{eq38b}), and $\mathcal{L}(\mu, C_k)$ is the Lagrange function, as given by
\begin{equation}\label{Lagrange function}
\mathcal{L}(\mu, C_k) = -\sum\limits_{k=1}^K {C_k}  + \mu \big(\sum\limits_{k=1}^K {P_k(C_k)} -{P_{\rm tot}}\big).
\end{equation}

Given $\mu$, $C_k$ can be optimized in parallel for every LAA-UE $k$ by taking the KKT conditions of (\ref{Lagrange function}), i.e.,
\[
\mu\frac{d P_k(C_k)}{d C_k}-1=0, k = 1,\cdots, K.
\]
Since $P_k(C_k)$ is convex, $\frac{d P_k(C_k)}{d C_k}$ is monotonic. As a result, the optimal $C_k^*$ can be efficiently solved by using bisection search \cite{24}.
Given $C_k^*$, we can update $\mu $ using the subgradient method, i.e., $\mu \leftarrow {\left[ {\mu  + {\delta}\left( {\sum\limits_{k=1}^k {P_k(C_k^*) - } {P_{\rm tot}}} \right)} \right]^ + }$, where ${\left[  \cdot  \right]^ + }$ is a projection on the positive orthant, and ${\delta} >0$ is the step size.
We can repeat the bisectional search for $C_k^*$ and the subgradient update of $\mu$ until convergence. Given the strong duality of (P2), the convergent $C_k^*$, $k=1,\cdots,K$, are the solution for (\ref{eq38}).

The above optimization of transmit powers can be readily extrapolated to joint allocation of bandwidth and power. Specifically, we can optimize the transmit powers given bandwidth allocation, as described, and then adjust the bandwidths by taking a Branch and Bound (BnB) method for discrete subchannels or a block coordinated descent (BCD) method for continuous bandwidths. These two steps repeat in an alternating manner until the effective capacity is maximized.
\subsection{Maximization of Effective Energy Efficiency}
Another application of our analysis in Section IV is to maximize the effective energy efficiency of LAA (in bits/Joule). The effective energy efficiency per LAA-BS is defined to be $\frac{{\sum_{k=1}^K {{C_k}(P_k)} }}{{\bar{P} }}$, where $\bar{P} $ is the average power consumption of the LAA-BS.
$\bar{P}$ consists of the static power such as cooling, denoted by ${P_{\rm {static}}}$, the power for sensing the band, denoted by ${P_{\rm {idle}}}$, and the transmission-dependent power depending on $P_k$ \cite{19,30}.

Employing the two-state semi-Markovian model, the average power of the LAA-BS is given by
\begin{equation}\label{eq26}
\begin{aligned}
\bar P &\mathop  = \limits^{(a)}\frac{{{\pi _1}({P_{\rm{idle}}}\bar I\bar \tau + \bar i\sum\limits_{k=1}^K \frac{1}{\xi} {P_k} {T_{\rm c}}) + {\pi _2}\sum\limits_{k=1}^K \frac{1}{\xi} {P_k} {T_{\rm f}}}}{{{\pi _1}(\bar I\bar \tau + \bar i{T_{\rm c}}) + {\pi _2}{T_{\rm f}}}} + {P_{\rm{static}}}\\
& = \frac{{{\pi _1}{P_{\rm{idle}}}\bar I\bar \tau + \sum\limits_{k=1}^K \frac{1}{\xi} {P_k} ({\pi _1}\bar i{T_{\rm c}} + {\pi _2}{T_{\rm f}})}}{{{\pi _1}(\bar I\bar \tau + \bar i{T_{\rm c}}) + {\pi _2}{T_{\rm f}}}} + {P_{\rm{static}}},\\
\end{aligned}
\end{equation}
where $\pi_1$ and $\pi_2$ are the stationary probabilities of the two-state semi-Markov model in Corollary 1; $\bar \tau$ is the average duration of a timeslot and provided in Appendix B; $\xi$ is the power amplifier efficiency. For illustration convenience, we assume the power amplifier is linear and $\xi$ is constant. The numerator in $(a)$ is the total energy consumption of the ON and OFF states in the two-state semi-Markovian model. The denominator is the corresponding duration of the states.

In (\ref{eq26}), $\bar i$ is the average number of collisions between two consecutive successful
(re)transmissions. In the case of $\varepsilon_k\rightarrow 0$, $\bar i$ can be given by
\begin{equation}\label{eq16}
\bar i = \frac{p_L}{(1-p_L)^2}.
\end{equation}
\begin{proof}
$\bar{i}$ can be written as $\bar{i}=\sum_{i=1}^{\infty}ip_L^i,$ which can be further rewritten as $\bar{i}=p_L+p_L\sum_{i=1}^{\infty}(i+1)p_L^i$ and then restructured to be $\frac{\bar{i}-p_l}{p_l}=\sum_{i=1}^{\infty}(i+1)p_l^i.$ Subtracting the first and the third equations, we have $\sum_{i=1}^{\infty}p_l^i=\frac{\bar{i}-p_L}{p_L}-\bar{i},$ where the LHS is equal to $\frac{p_L}{1-p_L}$.
As a result, $\bar{i}=\frac{p_L}{(1-p_L)^2}$.
\end{proof}

In (\ref{eq26}), $\bar{I}$ is the average number of timeslots between any two consecutive successful (re)transmissions. Take FCW for example in the case of $\varepsilon_k\rightarrow 0$, $\bar I$ is given by
\begin{equation}\label{eq17}
\bar I = \sum\limits_{i = 0}^\infty  {\frac{{{W_L} + 1}}{2}p_L^i}=\frac{W_L+1}{2(1-p_L)}.
\end{equation}

Moreover, (\ref{eq26}) can be rewritten as
\begin{equation}
\bar P = {P'_{\rm{static}}} + \frac{1}{{\xi '}}\sum\limits_{k = 1}^K {{P_k}},
\end{equation}
where ${{P'}_{\rm{static}}} = {P_{\rm{static}}} + \frac{{{\pi _1}{P_{\rm{idle}}}\bar I\bar \tau}}{{{\pi _1}(\bar I\bar X + \bar i{T_{cl}}) + {\pi _2}{T_{\rm f}}}}$, and $\xi ' = \frac{{{\pi _1}(\bar I\bar \tau  + \bar k{T_{\rm{c}}}) + {\pi _2}{T_{\rm{f}}}}}{{\xi ({\pi _1}\bar i{T_{\rm{c}}} + {\pi _2}{T_{\rm{f}}})}}$.

The maximization of the effective energy efficiency can be formulated as
\begin{subequations} \label{eq49}
\begin{equation}\textbf{{P3:}} \max_{P_k\,\forall k} \frac{{\sum\limits_{k=1}^K {{C_k}(P_k)} }}{{{{P'}_{\rm{static}}} + \frac{1}{{\xi '}} \sum\limits_{k=1}^K {P_k} }}  \end{equation}
\begin{equation}~~~~~~~s.t. \sum\limits_{k=1}^K {P_k}  \le {P_{\rm tot}},\end{equation}
\begin{equation}~~~~~~~~~~~~~~~~P_k \ge 0,k=1,\cdots,K.\end{equation}
\end{subequations}

Here, (\ref{eq49}) is a fractional program, and can be readily reformulated as a parametric convex optimization problem \cite{30}. The resultant parametric convex optimization problem takes $C_k$ as variables (as done in Section V-A). By defining a non-negative auxiliary variable $\varpi $, the parametric convex problem can be formulated as
\begin{subequations} \label{eq52}
\begin{equation} \label{eq52a}
\textbf{{P4:}}~{\emph H}(\varpi ) = \mathop {{\min}}\limits_{{{\bf{C_k}}}}  \bigg\{- \sum\limits_{k \in {\bf{K}}} {C_k}  + \varpi \Big({{P'}_{\rm{static}}} + \frac{1}{{\xi '}} \sum\limits_{k=1}^K {P_k(C_k)} \Big)\bigg\}\end{equation}
\begin{equation} \label{eq52b}
s.t. \sum\limits_{k=1}^K {P_k(C_k)}  \le {P_{\rm tot}},~~~~~~~~~~\end{equation}
\begin{equation} \label{eq52c}
C_k \ge 0 ,k=1,\cdots,K.~~\end{equation}
\end{subequations}

Solving (P3) is equivalent to determining the maximum ${\varpi ^{\rm{*}}}$ satisfying ${\emph H}(\varpi^* ) = 0$. Clearly, given $C_k$, ${\emph H}(\varpi )$ is strictly and monotonically increasing with $\varpi \geq 0$  \cite{30}. We can solve ${\emph H}(\varpi )=0$ using one-dimensional search, such as the Dinkelbach's method \cite{31}, with guaranteed convergence.

By exploiting Theorem 2, $P_k(C_k)$ is proved to be convex with strong duality; see Section V-A. As a result, (\ref{eq52a}), (\ref{eq52b}) and (\ref{eq52c}) are all convex. Given $\varpi$, (P4) is convex and the Lagrangian of (\ref{eq52}) can be given by
\begin{equation}\label{eq56}
\begin{aligned}
\mathcal{L}(\mu ,{C_k}) =  &- \sum\limits_{k = 1}^K {{C_k}} + (\mu  + \frac{\varpi}{{\xi '}})\sum\limits_{k = 1}^K {f({C_k})} \\
&- \mu {P_{{\rm{tot}}}} + \varpi {{P'}_{\rm{static}}}
\end{aligned}
\end{equation}
where $\mu $ is the Lagrangian multiplier for (\ref{eq52b}).

Given $\mu $ and $\varpi $, the optimal solution for (\ref{eq56}) can be taken by solving the KKT conditions of (\ref{eq56}), i.e.,
\begin{equation}\label{eq57}
(\mu  +  \frac{\varpi}{{\xi '}})\frac{d P_k(C_k)}{d C_k}-1 = 0,
\end{equation}
where the optimal $C_k^*$, $k=1,\cdots, K$, can be achieved by using bisection search since $\frac{d P_k(C_k)}{d C_k}$ is monotonic, as discussed in Section V-A.

Given $C_k^*$ ($k=1,\cdots, K)$, $\mu$ can be updated by using the subgradient method, as described in Section V-A, and $\omega$ can be updated by using the Dinkelbach's method. This repeats until convergence. The convergent $C_k^*$, $k=1,\cdots, K$, are the global optimal solution for (\ref{eq49}).
%
%
\section{Simulations and Numerical Results}
In this section, we validate the proposed effective capacity of LAA. We also demonstrate the optimization of the transmit power to maximize the new effective capacity and effective energy efficiency. In our simulations, the LAA-BSs and WiFi nodes are randomly and uniformly distributed within an area of $500\times500$ m$^2$. The LAA-UEs associated with a LAA-BS are randomly and uniformly distributed within an area of $\frac{500}{\sqrt{N}}\times\frac{500}{\sqrt{N}}$ m$^2$ centered at the LAA-BS. The ITU-UMI model~\cite{access2010further} is used generate the channels between the LAA-UEs and the LAA-BS. $\sigma^2=-174$ dBm/Hz at each LAA-UE. $P_{\rm tot}= 23$ dBm. We assume that the LBT energy detection threshold is sufficiently low, and there is no hidden node problem. This is consistent with the assumption in the paper Section III (Nevertheless, the assumption can be lifted without changing our model, as discussed in Section IV). We also assume that all LAA-UEs can accurately measure and feed back their channel gains to the LAA-BSs via the licensed band, and the channel gains are precisely known to the LAA-BSs. This assumption is reasonable, due to the real scalar nature of chain gains and the exponentially decreasing error of scalar quantization (as quantization bits increase).

In the simulations, the traffic model is constant traffic arrival. This is because the effective capacity, specifies the maximum, consistent, steady-state input rate without violating QoS, and is of particular importance to avoid traffic saturation at the LAA-BSs. However, the proposed model can also be employed to analyze stochastic traffic arrivals. Some recent works can transform stochastic traffic arrivals into equivalent constant arrivals by using the idea of effective bandwidth \cite{Du2010}. Our proposed model can evaluate the maximum rate of these equivalent constant traffic arrivals that can be accommodated given QoS requirements, and then convert the maximum constant rate back to the one describing the stochastic arrivals.

The 5GHz unlicensed band is considered. Both the FCW and VCW modes are taken into account. $W_L=16$, $W_0=32$, $K_L=K_W=6$, and $\sigma_{\rm idle}=10\mu$s. The duration is 1ms per (re)transmission of LAA and WiFi. The durations of CCA and DIFS are $34\mu$s and $50\mu$s, respectively. These parameters are consistent with 3GPP Release-13~\cite{3}. Keep in mind that QoS has yet to be specified in the standard. Our simulations are reflective of actual LAA specified in 3GPP Release-13, but in the presence of QoS. Our simulation runs for 100 seconds at a sampling interval of 2 milliseconds, to achieve the convergent result of the effective capacity \cite{20}. Without loss of generality, we assume all the LAA-UEs have the same QoS exponent $\theta$.

\begin{figure}[!t]
\centering
\includegraphics[width=3.5in]{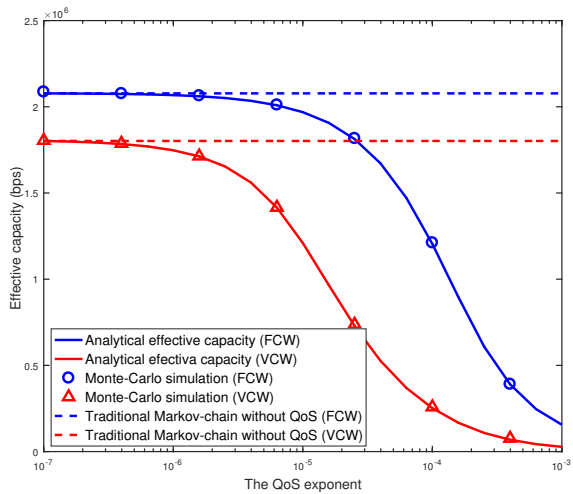}
\vspace{-0.4cm}
\caption{The effective capacity vs $\theta$, where $M=5$, $N=5$, and $K=1$.}
\label{figg1}
\end{figure}

In Fig. \ref{figg1}, the analytical effective capacity evaluated by (\ref{eq29}) is compared with the simulations results, where the numbers of WiFi nodes and LAA-BSs are $M = 5$ and $N = 5$, respectively, the number of LAA-UEs is $K=1$, and the bandwidth is $B=5$MHz. Each curve in the figure is plotted by increasing the persistent incoming rate of the LAA-BSs and evaluating the achieved QoS exponent $\theta$. The QoS exponent is specified by the $x$-axis, while the incoming rate is specified by the $y$-axis. We can see that the analysis, i.e., (\ref{eq29}), coincides the simulation results in both case of FCW and VCW; in other words, our analysis is accurate. We also see that the effective capacity is sensitive to $\theta$ within the region ${\rm{1}}{{\rm{0}}^{{\rm{ - 6}}}} \le \theta  \le {\rm{1}}{{\rm{0}}^{{\rm{ - 4}}}}$. In the case of $\theta>10^{-4}$, the QoS is too stringent and the effective capacity is small. In the case of $\theta<10^{-6}$, the QoS is too loose and the effective capacity is expected to approach to the capacity of the LAA-BS without QoS. In fact, we also analytically plot the capacity using the existing Markov chain analysis \cite{15,840210}, and confirm the convergence of the effective capacity and the capacity in the case without QoS. Further, we see that FCW can significantly outperform VCW in the presence of a small number of WiFi nodes and stringent QoS requirements. One reason is because VCW, though outperforming FCW in terms of the coexistence with WiFi (as discussed in Section III), can suffer from severe exponentially delayed retransmissions, violate QoS requirements, and therefore incur significant losses of effective capacity. Another reason is because FCW, sticking to a fixed small backoff window size, gives priority to LAA-BSs but can be more intrusive to WiFi, as compared to VCW, as discussed in Section I. However, FCW is less effective in terms of reacting to intensive collisions and therefore performs worse than VCW in the presence of large numbers of LAA-BSs and WiFi nodes, as will be shown later.

\begin{figure}[!t]
\centering
\includegraphics[width=3.5in]{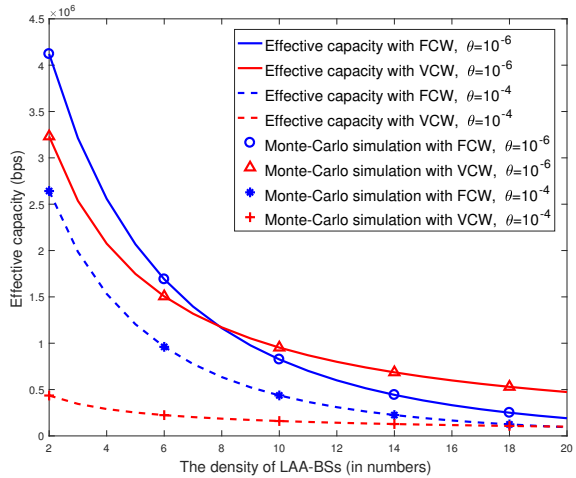}
\caption{The effective capacity of LAA versus $N$, where $M=5$, $K=1$, and $\theta=10^{-6},\,10^{-4}$.}
\label{figg2}
\end{figure}

\begin{figure}[!t]
\centering
\includegraphics[width=3.5in]{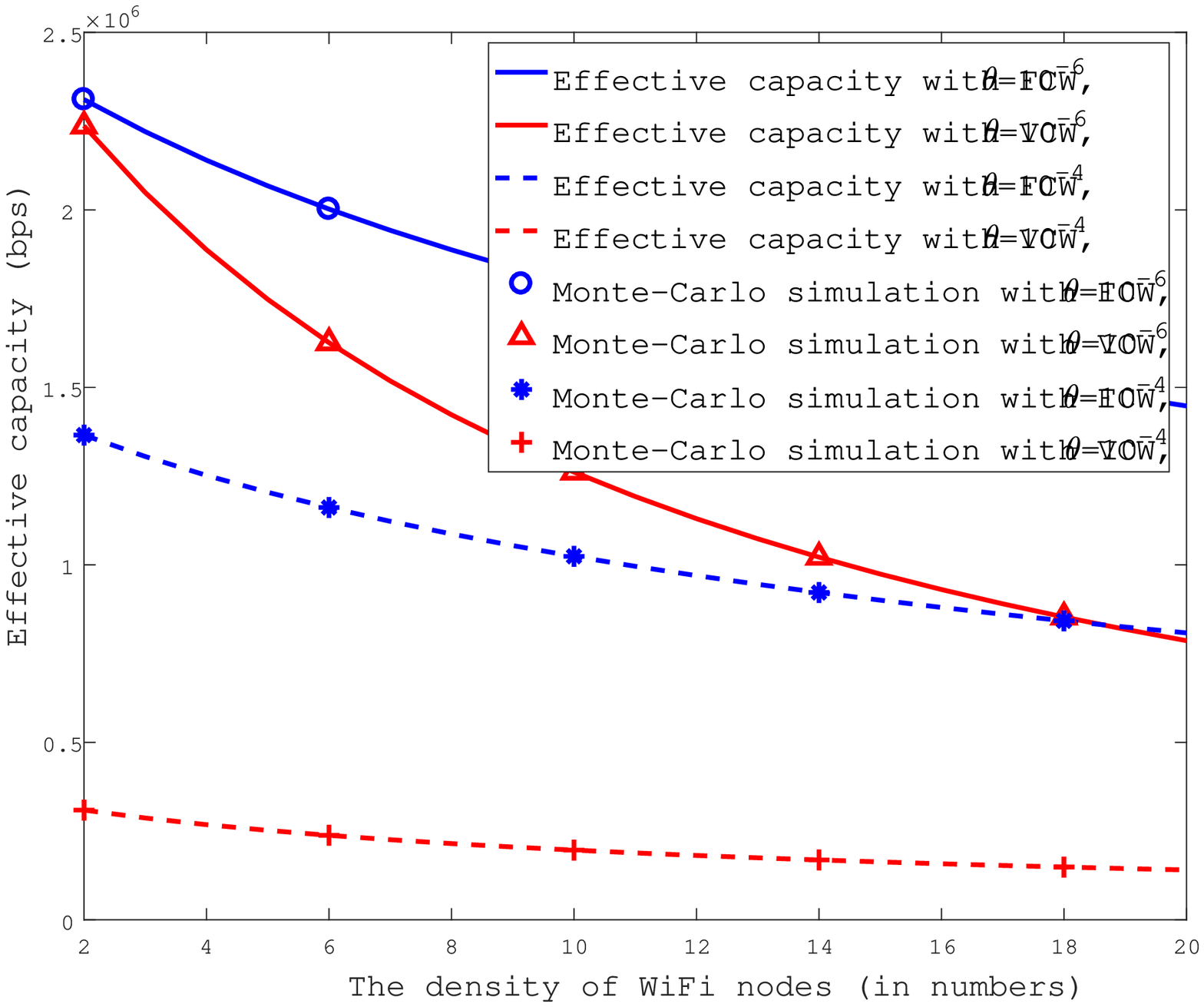}
\vspace{-0.4cm}
\caption{The effective capacity of LAA versus $M$, where $N = 5$, $K=1$, and $\theta=10^{-6},\,10^{-4}$.}
\label{figg3}
\vspace{-0.2cm}
\end{figure}

Fig. \ref{figg2} plots the effective capacity of LAA with the growing number of LAA-BSs $N$, where $M = 5$, $B=5$MHz, $K=1$, and $\theta=10^{-6},\,10^{-4}$. Validated by simulations, our analysis is once again confirmed to be accurate. The figure shows that the effective capacity of a LAA-BS decreases as $N$ increases. This is due to the increasing (re)transmission collisions. We also see that the decrease of the effective capacity slows down as $N$ grows. This is because the effective capacity is less susceptible to the number of LAA-BSs if there are more LAA-BSs and in turn the more intense collisions. Particularly, in the case that there are few LAA-BSs with mild collisions, FCW sticking to a small CW can get higher chances to access the channel over WiFi and consequently higher effective capacity. In the case that there are many LAA-BSs, VCW exponentially increasing the CW alleviates collisions and the loss of the effective capacity. In contrast, FCW suffers intense collisions and the effective capacity diminishes. In this sense, FCW is susceptible to the number of LAA-BSs, while VCW is robust.

Fig. \ref{figg3} plots the effective capacity of LAA with the increasing number of WiFi devices $M$, where $N = 5$, $B=5$MHz, $K=1$, and $\theta=10^{-6},\,10^{-4}$. We can see that VCW is more susceptible to the density of WiFi than FCW, given $N$ and $\theta$. As mentioned earlier, FCW allows a LAA-BS to stick to a small CW and as a result, gain priority over WiFi to access the channel. Therefore, FCW is less sensitive to the number of WiFi devices. In contrast, VCW exponentially increases the CW and increases the opportunities for WiFi devices to access the channel. Being more friendly to WiFi, VCW is more sensitive to the number of WiFi devices.

\begin{figure}[!t]
\centering
\includegraphics[width=3.5in]{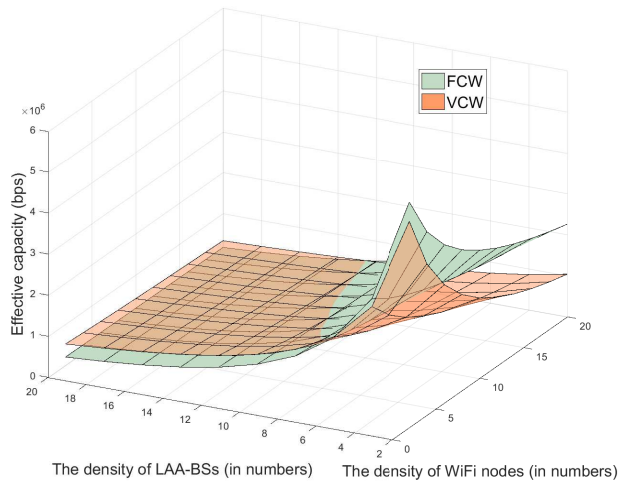}
\vspace{-0.4cm}
\caption{The effective capacity of LAA versus the respective densities of LAA-BSs and WiFi nodes, where $\theta=10^{-6}$.}
\label{figg4}
\vspace{-0.2cm}
\end{figure}

Extended from Figs. 5 and 6, Fig. \ref{figg4} provides a joint view of the effective capacity of LAA against both the densities of LAA-BSs and WiFi nodes, where $\theta ={10^{{\rm{ - }}6}}$. The conclusion drawn is that FCW is suitable for deployments with low density of LAA-BSs and high density of WiFi devices, while VCW is preferable for deployments with high density of LAA-BSs and low density of WiFi devices.

\begin{figure}[!t]
\centering
\includegraphics[width=3.5in]{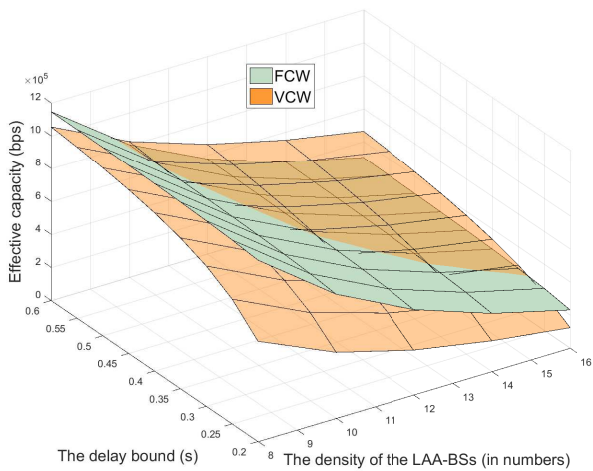}
\vspace{-0.4cm}
\caption{Effective capacity of LAA versus the density of LAA-BSs and traffic delay requirements.}
\label{figg5}
\vspace{-0.2cm}
\end{figure}

In Fig. \ref{figg5}, we evaluate the impact of the delay bound of traffic, $D_{\max}^k$, and the density of LAA-BSs, $N$, on the effective capacity of LAA, where $M = 5$, $B=5$MHz, and $K=1$. Given $D_{\max}^k$, the probability that the steady-state traffic delay at UE $k$ exceeds $D_{\max}^k$ is ${P^k_{th}} \approx  {\eta _k}{e^{ - {\theta_k}{C_k}({\theta_k }){D^k_{\max }}}}$ \cite{21}, where ${\eta _k}$ is the non-empty buffer probability, and can be approximated as the ratio of the constant arrival rate to the average transmit rate. Setting $P^k_{th}=0.1$, we can evaluate $C_k(\theta_k)$ by varying $D_{\max}^k$ and $N$. We can observe that the effective capacity grows with the delay bound in both cases of FCW and VCW. We also see that, FCW is more tolerant to the change of the delay bound, even when the bound is small. This is because FCW uses a small and fixed CW, and gains priority and earlier access the channel over WiFi nodes. In contrast, VCW is susceptible to small delay bounds, since it enlarges the CW to combat collisions at the cost of increased delays. On the other hand, VCW is more robust to the increasing number of LAA-BSs than FCW, as also observed in Figs. \ref{figg2} and \ref{figg4}. In this sense, a careful selection of FCW or VCW is important under different settings of delay bound and device numbers.

Fig. \ref{fig61} plots the maximum effective capacity of a LAA-BS by optimizing the transmit powers, as described in Section V-A.  Here, $M = 4$, $N = 1$, $K=20$, $B=20$MHz, and VCW is adopted. For comparison purpose, we also simulate a total channel inversion method \cite{Tang2007} and the water-filling method \cite{Goldsmith1997}. Water-filling only depends on the channel gains and maximizes the capacity. The total channel inversion method allocates the transmit power inversely proportionally to the channel gain of every LAA-UE, which is proved to be asymptotically optimal for maximizing the effective capacity of a single wireless point-to-point link as $\theta \rightarrow \infty$ \cite{Tang2007}. The figure shows that the proposed approach increasingly outperforms water-filling, as $\theta$ increases (i.e., the QoS becomes stringent). For instance, the gain of the proposed approach is up to 62.7\% in the case of $\theta=10^{-3}$. The proposed approach is indistinguishably close to water-filling, when $\theta\rightarrow 0$. This is because the effective capacity is equivalent to the capacity under loose QoS, while water-filling maximizes the capacity. On the other hand, the proposed approach outperforms the total channel inversion method across a wide range of $\theta\leq 10^{-2}$. For $\theta>10^{-2}$, the proposed approach provides the same performance as the total channel conversion method which is asymptotically optimal as $\theta\rightarrow \infty$.

\begin{figure}[!t]
\centering
\includegraphics[width=3.5in]{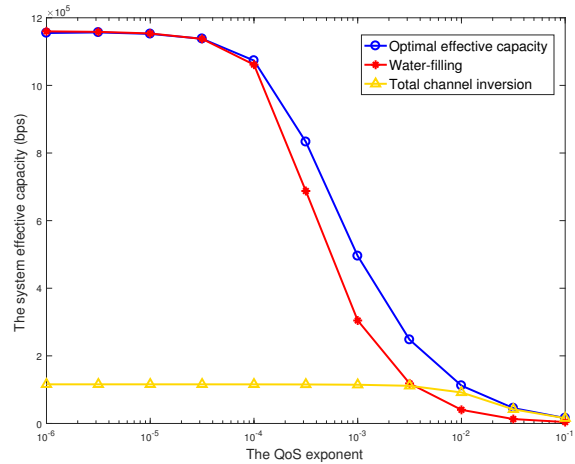}
\vspace{-0.4cm}
\caption{The effective capacity for different power allocation strategies.}
\label{fig61}
\vspace{-0.2cm}
\end{figure}

\begin{figure}[!t]
\centering
\includegraphics[width=3.5in]{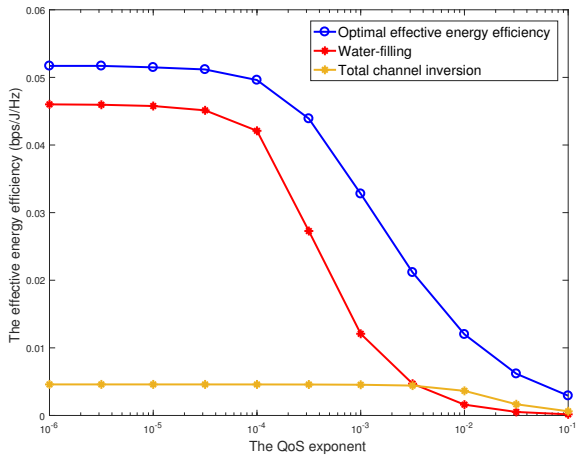}
\vspace{-0.4cm}
\caption{The effective energy efficiency for different power allocation strategies.}
\label{fig7}
\vspace{-0.2cm}
\end{figure}

Fig. \ref{fig7} plots the maximum effective energy efficiency of a LAA-BS by optimizing its transmit powers, as described in Section V-B. Here, $M = 4$, $N = 1$, $K=20$, $B=20$MHz, and VCW is adopted. $P_{\rm{static}}=P_{\rm{idle}}=0.1$ $W$, and $\xi=0.1$. The figure shows that the proposed approach is superior to the water-filling and total channel inversion in terms of effective energy efficiency. In the case of $\theta=10^{-3}$, the gains of the proposed approach are up to 171.4\% and 621.8\%, respectively.

It is worth pointing out that in the case that $\theta\rightarrow 0$ (e.g., $\theta\leq 10^{-5}$), the proposed approach provides a much higher effective energy efficiency than water-filling. This is different from the observation in Fig. \ref{fig61}. The reason is that the water-filling only can maximize the capacity, not energy efficiency, i.e., the ratio of capacity to the total power. When $\theta\rightarrow 0$, the effective energy efficiency recedes to the energy efficiency. Water-filling requires excessively high powers to maximize the capacity. In contrast, the proposed approach reduces the total power and leverages the power with the capacity, thereby maximizing the energy efficiency.

\section{Conclusion}
In this paper, we analyzed the effective capacity of LAA under statistically characterized QoS requirements and lossy wireless channel conditions. Closed-form expressions were derived to establish the connections between the effective capacity, QoS, channel conditions and transmission durations in a distributed heterogeneous network environment. The concavity of the effective capacity was revealed. Validated by simulations, the concavity was exploited to maximize the effective capacity and effective energy efficiency of LAA, and provided significant improvements of 62.7\%
and 171.4\%, respectively. Our analysis was of practical value to future holistic designs and deployments of LAA systems.

\appendices

\section{Evaluation of $\upsilon_L$ and $p_L$}
In the case of FCW, the transmission probability of a LAA-BS can be given by
\begin{equation}\label{eq2}
{\upsilon _L} = 1/{e_0} = \frac{2}{{{W_L} + 1}},
\end{equation}
where ${e_{\rm{0}}}$ is the mean backoff time.

In the case of VCW, the transmission probability of a LAA-BS can be given by
\begin{equation}\label{eq3}
{\upsilon _L} = \frac{{1 + {p_L} +  \cdots  + p_L^{{K_L} - 1}}}{{{e_0} + {p_L}{e_1} +  \cdots  + p_L^{{K_L} - 1}{e_{{K_L} - 1}}}},
\end{equation}
where ${e_j}$ is the mean backoff time of the $j$-th (re)transmission for a packet in LAA.

Given collision probability ${p_W}$, the transmission probability of a WiFi node can be given by \cite{22}
\begin{equation}\label{eq1}
{\upsilon _W} = \frac{{1 + {p_W} +  \cdots  + p_W^{{K_W} - 1}}}{{{b_0} + {p_W}{b_1} +  \cdots  + p_W^{K{}_W - 1}{b_{{K_W} - 1}}}},
\end{equation}
where ${b_j}$ is the mean backoff time of the $j$-th retransmission for a packet in WiFi, and $K_W$ is the maximum number of retransmissions of WiFi per packet.

The collision probabilities of a WiFi node and a LAA-BS can be obtained by solving
\begin{equation}\label{eq4}
{p_W} = 1 - {(1 - {\upsilon _W})^{M - 1}}{(1 - {\upsilon _L})^N},
\end{equation}
\begin{equation}\label{eq5}
{p_L} = 1 - {(1 - {\upsilon _W})^M}{(1 - {\upsilon _L})^{N - 1}}.
\end{equation}
By Brouwer's fixed point theorem \cite{22}, there exists a fixed point or unique solution for (\ref{eq4}) and (\ref{eq5}). $\upsilon_L$, $p_L$, $\upsilon_W$ and $p_W$ are readily available.

Our model can be applied in the presence of hidden nodes. Particularly, hidden nodes would affect $p_L$, $p_W$ and $f(\tau)$, but would not affect our model where $p_L$, $p_W$ and $f(\tau)$ are just inputs. Extended from ~\cite{Ekici2008,Jang2012}, $p_L$ and $p_W$ can be rewritten in the presence of hidden nodes, as given by
\begin{equation*}
\begin{aligned}
{p_L}\!=&1\!-\!{(1\! - \!{v_L})^{{N} - 1}}{(1 \!-\! {v_W})^{{M}}}{\left[ {{{(1 - {v_L})}^{{h_L}}}{{(1 - {v_W})}^{{h_W}}}} \right]^{\frac{{2{T_s}}}{{\bar \tau }}}}\\
{p_W}\!=&1\!-\!{(1 \!- \!{v_L})^{{N} }}{(1 \!- \!{v_W})^{{M}-1}}{\left[ {{{(1 - {v_L})}^{{h_L}}}{{(1 - {v_W})}^{{h_W}}}} \right]^{\frac{{2{T_s}}}{{\bar \tau }}}}
\end{aligned}
\end{equation*}
where $h_L$ and $h_W$ are the numbers of hidden LAA base stations and hidden WiFi nodes, respectively, and $f(\tau)$ and $\bar{\tau}$ can be explicitly written in terms of $p_W$, $v_W$, $p_L$ and $v_L$.
\newcounter{mytempeqncnt1}
\begin{figure*}%
\normalsize
\begin{equation}
\label{eq1022}
{f}(\tau) = \left\{ \begin{array}{l}
\Pr\{ {\tau} = {\sigma _{\rm {idle}}}\}  = {(1 - {\upsilon _L})^{N - 1}}{(1 - {\upsilon _W})^M}\\
\Pr \{ \tau  = {T_{{\rm{cw}}}}\} = (1 - {(1 - {\upsilon _W})^M} - M{\upsilon _W}{(1 - {\upsilon _W})^{M - 1}}){(1 - {\upsilon _L})^{N - 1}}\\
\Pr\{ {\tau} = {T_{\rm c}}\}  = {(1 - {\upsilon _W})^M}(1 - {(1 - {\upsilon _L})^{N - 1}} - (N - 1){\upsilon _L}{(1 - {\upsilon _L})^{N - 2}})\\
\Pr\{ {\tau} = {T_{\rm {sw}}}\}  = {(1 - {\upsilon _L})^{N - 1}}M{\upsilon _W}{(1 - {\upsilon _W})^{M - 1}}\\
\Pr\{ {\tau} = {T_{\rm f}}\}  = {(1 - {\upsilon _W})^M}(N - 1){\upsilon _L}{(1 - {\upsilon _L})^{N - 2}}\\
\Pr\{ {\tau} = {T_{\rm {wl}}}\}  = 1-\Pr\{ {\tau} = {\sigma _{\rm {idle}}}\}  - \Pr\{ {\tau} = {T_{\rm {cw}}}\}  - \Pr\{ {\tau} = {T_{\rm {c}}}\}  - \Pr\{ {\tau} = {T_{\rm {sw}}}\}  - \Pr\{ {\tau} = {T_{\rm {f}}}\}
\end{array} \right.
\end{equation}
\hrulefill %
\vspace*{4pt}
\end{figure*}

\section{Calculation of the PGF}

Let ${\tau}$ denote the duration of a timeslot. It can take from six values: $\sigma _{\rm {idle}}$, ${T_{\rm {cw}}}$, ${T_{\rm c}}$, ${T_{\rm {sw}}}$, ${T_{\rm f}}$ and ${T_{\rm {wl}}}$. Here, $\delta_{\rm {idle}}$ corresponds to an idle slot; $T_{\rm {cw}}$ corresponds to a slot with collisions between WiFi nodes; $T_{\rm c}$ corresponds to a slot with collisions between the other LAA-BSs; $T_{\rm {sw}}$ corresponds to a slot with a successful WiFi transmission; $T_{\rm f}$ corresponds to a slot with a successful transmission of other LAA-BSs; and $T_{\rm {wl}}$ corresponds to a slot with collisions between WiFi nodes and other LAA-BSs.

The probability mass function (PMF) of ${\tau}$, denoted by $f(\tau)$, is given in (\ref{eq1022}), where $p_L$ and $\upsilon_L$ are the collision probability and the transmission probability of a LAA-BS per timeslot, respectively; and $p_W$ and $\upsilon_W$ are the collision probability and the transmission probability of a WiFi node per timeslot, respectively. These parameters can be calculated in Appendix A. In (\ref{eq1022}), we assume that ${T_{\rm c}} = {T_{\rm f}} = {T_{\rm {wl}}}$. $T_{\rm {sw}}$ and $T_{\rm {cw}}$ can be different in the different WiFi modes. The PGF and the mean of $\tau$ are given respectively by
\begin{equation}\label{eq188}
\begin{aligned}
\hat \tau(z) &= \Pr\{ {\tau} = {\sigma _{\rm {idle}}}\} {z^{{\sigma _{\rm {idle}}}}} + \Pr\{ {\tau} = {T_{\rm {cw}}}\} {z^{{T_{\rm {cw}}}}} \\
&+ \Pr\{ {\tau} = {T_{\rm {c}}}\} {z^{{T_{\rm {c}}}}} + \Pr\{ {\tau} = {T_{\rm {sw}}}\} {z^{{T_{\rm {sw}}}}} \\
&+ \Pr\{ {\tau} = {T_{\rm {f}}}\} {z^{{T_{\rm {f}}}}} + \Pr\{ {\tau} = {T_{\rm {wl}}}\} {z^{{T_{\rm {wl}}}}},
\end{aligned}
\end{equation}
\begin{equation}\label{eq18}
\begin{aligned}
\bar \tau  =& \Pr \{ \tau  = {\sigma _{{\rm{idle}}}}\} {\sigma _{{\rm{idle}}}} + \Pr \{ \tau  = {T_{{\rm{cw}}}}\} {T_{{\rm{cw}}}} + \Pr \{ \tau  = {T_{\rm{c}}}\} {T_{\rm{c}}}\\
&+ \Pr \{ \tau  = {T_{{\rm{sw}}}}\} {T_{{\rm{sw}}}} + \Pr \{ \tau  = {T_{\rm{f}}}\} {T_{\rm{f}}} + \Pr \{ \tau  = {T_{{\rm{wl}}}}\} {T_{{\rm{wl}}}}.
\end{aligned}
\end{equation}

\end{document}